\documentclass[]{fundam}

\publyear{22}
\papernumber{2102}
\volume{185}
\issue{1}

\usepackage{pgfplots}
\pgfplotsset{compat=1.18}

\usepackage{amsfonts}
\usepackage{amsmath}
\usepackage{amssymb}
\usepackage[only,llbracket,rrbracket]{stmaryrd}
\usepackage{algorithm}
\usepackage{algpseudocode}
\usepackage{tikz}
\usepackage{bbm}
\usepackage{mathtools}
\usepackage{hyperref}
\usetikzlibrary {petri,positioning,arrows}
\usepackage{caption}
\usepackage{subcaption}
\usepackage{todonotes}

\newcommand{\positivereals}{\mathbb{R}_{\ge 0}}
\newcommand{\nats}{\mathbb{N}^0}
\newcommand{\net}{\mathcal{N}}
\newcommand{\finally}{F}
\newcommand{\globally}{G}

\newcommand{\fn}[1]{{\sf{#1}}}
\newcommand{\sched}{\fn{scheduled}}
\newcommand{\prob}{\mathbb{P}}
\newcommand{\en}{\fn{en}}
\newcommand*\diff{\mathop{}\!\mathrm{d}}

\title{TAPAAL SMC: Statistical Model Checking of Stochastic Timed-Arc Petri Nets}

\author{Tanguy Dubois \\ Nantes Université \\  \'{E}cole Centrale Nantes,  CNRS, Nantes, France  \and  Kim G. Larsen  \\ Aalborg University\\Department of Computer Science, Aalborg, Denmark \and Ji\v{r}\'{\i} Srba  \\ Aalborg University\\Department of Computer Science, Aalborg, Denmark}

\runninghead{Dubois, Larsen and Srba}{TAPAAL SMC: Statistical Model Checking of Stochastic
Timed-Arc Petri Nets}

\begin{document} 

\maketitle

\begin{abstract}
    Timed-Arc Petri net (TAPN) is a timed extension of the classical Petri net model where tokens have their age and input arcs are associated with  time intervals restricting the ages of tokens available for transition firing.
    Additionally, a TAPN can also contain place invariants constraining the ages of tokens in places, inhibitor arcs preventing a transition from firing and transport arcs that preserve  token ages upon firing.
    This set of features, as much as it allows us to model complex systems, also often makes verification problems computationally hard or even undecidable. Moreover, in order to model real-life examples, additional stochastic aspects are often necessary to capture the desired behaviour.
    We suggest the first stochastic semantics for TAPNs and design and implement the quantitative and qualitative Statistical Model Checking (SMC) algorithms in the model checker TAPAAL.
    We argue  for the semantic choices we made in the stochastic semantics and prove that the semantics is well-behaving. On a number of case studies we demonstrate the practical applicability of our modelling formalism and its SMC implementation. 
\end{abstract}

\sloppy 

\section{Introduction}

In many complex systems, exact deterministic models fail to capture the inherent uncertainty and variability present in real-world processes. Introducing stochastic elements to formal models allows us to better represent these unpredictable dynamics, providing a more realistic framework for analysis~\cite{handbook-stochastic-methods}. However, the exact verification of these models often becomes infeasible due to the sheer computational complexity involved. Instead, statistical model checking~\cite{smc-survey} offers a practical alternative: it provides approximate but computationally efficient results. While statistical methods may not guarantee absolute precision, they offer a valuable trade-off between accuracy and performance, enabling meaningful insights into systems that are otherwise too complex to analyze.

Research on introducing stochastic semantics to Petri nets focuses on incorporating randomness into the modelling of systems with uncertain or probabilistic behaviour. Stochastic Petri nets~\cite{florin_stochastic_1991,bause_stochastic_2013} extend traditional Petri nets by assigning probabilistic firing rates to transitions, enabling the modelling of systems with random delays, such as in queuing or communication networks. This model is further extended to generalized stochastic Petri nets~\cite{Balbo2007} which include also immediate transitions and further extensions with e.g. colored tokens~\cite{colorSPN}. A~complex timing behaviour can 
be modelled by stochastic time Petri nets~\cite{cicirelli_qualitative_2015} where apart from the
stochastic aspects, every transition also contains a firing interval that defines the earliest
and latest firing date of the transition. As there are only finitely many transitions (and hence we need to use only a fixed number of clocks), the SMC engine of UPPAAL is used in~\cite{cicirelli_qualitative_2015} 
to perform statistical model checking. On the other hand, the model of timed-arc Petri nets (TAPN) introduced in~\cite{tapn-intro1,tapn-intro2} (see also an overview paper~\cite{jacobsen_verification_2011}),
requires a new clock for each token in the net and the total number
of clocks cannot be a priori fixed. TAPNs equipped with transport arcs are more expressive 
than timed Petri nets and timed automata~\cite{expressiveness-timedPN,expressiveness-overview},
however, to the best of our knowledge, a stochastic semantics for this model has not been suggested yet. In this paper, we  consider the classical TAPN model extended with
additional features like transport arcs, age invariants and inhibitor arcs as used in the model checker TAPAAL~\cite{DJJJMS:TACAS:12}. We suggest the first stochastic semantics to this extended model, implement a new statistical model checking engine as a part of the TAPAAL suite and evaluate it on
a number of realistic case studies.


We finish the introduction by giving an intuition of our stochastic extension of TAPNs by describing a simple producer/consumer
system depicted as a stochastic TAPN in Figure~\ref{fig:running}. The net in the figure models a producer
that is either in the place \emph{Ready} or \emph{Resting}. As soon as the transition
\emph{Produce} is enabled (e.g. in the initial marking that contains one token of age $0$ in the place \emph{Ready}), we sample its firing date $d \in \positivereals$
from normal distribution with mean $1.5$ and standard deviation $0.3$. The system then delays
for $d$ time units, followed by firing the transition $\emph{Produce}$ which deposits new tokens of age $0$
into the places \emph{Resting} and \emph{Store}. At this moment, two transitions become newly enabled, namely \emph{Recover} and \emph{Consume1}, and we sample their firing dates from uniform distribution between $0.5$ and $1.5$, and exponential distribution with rate $0.4$, respectively. The system now delays until the next 
interesting event occurs, meaning that either one of the two enabled transitions
reaches its firing time, or a new transition becomes enabled.
In the latter case, after three time units the transition \emph{Consume2} becomes enabled as the token's
age now fits into the interval $[3,8]$ and we sample its
firing date from the exponential distribution with rate $0.6$.

\begin{figure}[t]
\centering
\scalebox{1.1}{
\begin{tikzpicture}[font=\scriptsize, xscale=0.1, yscale=0.07, x=1.66pt, y=2pt]
\tikzstyle{arc}=[->,>=stealth,thick]
\tikzstyle{transportArc}=[->,>=diamond,thick]
\tikzstyle{inhibArc}=[->,>=o,thick]
\tikzstyle{every place}=[minimum size=6mm,thick]
\tikzstyle{every transition} = [fill=black,minimum width=2mm,minimum height=5mm]
\tikzstyle{every token}=[fill=white,text=black]
\tikzstyle{sharedplace}=[place,minimum size=7.5mm,dashed,thin]
\tikzstyle{sharedtransition}=[transition, fill opacity=0, minimum width=3.5mm, minimum height=6.5mm,dashed]
\tikzstyle{urgenttransition}=[place,fill=white,minimum size=2.0mm,thin]
\tikzstyle{uncontrollabletransition}=[transition,fill=white,draw=black,very thick]
\tikzstyle{globalBox} = [draw,thick,align=left]
\node[place, label={[align=left,label distance=0cm]90:$\mathit{Ready}$}] at (540,-360) (Ready) {};
\node at (540.0,-360.0){0};
\node[place, label={[align=left,label distance=0cm]270:$\mathit{Resting}$}] at (540,-810) (Resting) {};
\node[place, label={[align=left,label distance=0cm]90:$\mathit{Store}$}] at (1305,-585) (Store) {};
\node[place, label={[align=left,label distance=0cm]270:$\mathit{Waste}$}] at (1305,-1215) (Waste) {};
\node[transition, label={[align=left,label distance=0.2cm]0:$\mathit{Recover}$\\$\mathit{uniform(0.5,1.5)}$ }] at (225,-585) (Recover) {};
\node[transition, label={[align=left,label distance=0cm]315:$\mathit{Produce}$\\$\mathit{normal(1.5,0.3)}$ }] at (855,-585) (Produce) {};
\node[transition, label={[align=left,label distance=0cm]90:$\mathit{Consume1}$\\$\mathit{exponential(0.4)}$\\$\mathit{M=Random}$ }] at (1845,-360) (Consume1) {};
\node[transition, label={[align=left,label distance=0cm]270:$\mathit{Consume2}$\\$\mathit{exponential(0.6)}$\\$\mathit{M=Youngest}$ }] at (1845,-810) (Consume2) {};
\node[transition, label={[align=left,label distance=0cm]180:$\mathit{Expire}$\\$\mathit{constant(0)}$ \\$\mathit{W=\infty}$ }] at (1305,-900) (Expire) {};
\draw[arc,pos=0.5] (Ready) to node[bend right=0,auto,align=left] {$\mathit{}$ $\mathit{[0,\infty)}$ } (Produce);
\draw[arc,pos=0.5] (Produce) to node[bend right=0,auto,align=left] {} (Resting);
\draw[arc,pos=0.5] (Resting) to node[bend right=0,auto,align=left] {$\mathit{}$ $\mathit{[0,\infty)}$ } (Recover);
\draw[arc,pos=0.5] (Recover) to node[bend right=0,auto,align=left] {} (Ready);
\draw[arc,pos=0.5] (Produce) to node[bend right=0,auto,align=left] {} (Store);
\draw[arc,pos=0.5] (Store) to node[bend right=0,auto,align=left] {$\mathit{}$ $\mathit{[0,\infty)}$ } (Consume1);
\draw[arc,pos=0.5] (Store) to node[bend right=0,auto,align=left] {$\mathit{}$ $\mathit{[3,8]}$ } (Consume2);
\draw[transportArc,pos=0.5] (Store) to node[bend right=0,auto,align=left] {$\mathit{}$ $\mathit{[9,\infty)}$ } (Expire);
\draw[transportArc,pos=0.5] (Expire) to node[bend right=0,auto,align=left] {$\mathit{}$} (Waste);

\end{tikzpicture}
}
\caption{Stochastic TAPN modelling a produced with two consumers}
\label{fig:running}
\end{figure} 

Now three transitions are enabled and the one with the earliest scheduled firing time will fire, unless another $5$ time units pass and the transition \emph{Consumer2} gets disabled
and its scheduled firing time is deleted. Should no transition be scheduled
for an additional time unit, the transition \emph{Expire} also becomes enabled. Its firing date is sampled
from the constant distribution with mean $0$, meaning that it is scheduled to fire
immediately and it will move the product from \emph{Store} to \emph{Waste}
while preserving its age $9$ (the special arrow types are the transport arcs which move tokens without changing their ages).
The transition \emph{Expire} has weight $\infty$, meaning that in case that at the same time
other transitions are also enabled and scheduled to fire, \emph{Expire} will get the priority in firing (assuming that the default weight of all other transitions is $1$).

Finally, as the place $\emph{Store}$ can possibly contain multiple tokens of different ages,
the firing modes of transitions \emph{Consume1} and \emph{Consume2} specify which tokens
will be consumed during the firing. For \emph{Consume1} we shall use a randomly selected token that enables the transition, however, \emph{Consume2} will always use the youngest available token in the interval $[3,8]$.

We may now ask what is the probability that there appears a product in the 
place \emph{Waste} within the first 20 time units of execution of the system, which
we can, with the help of our SMC tool that executes 461\,121 random runs, estimate to have the probability $0.045 \pm 0.002$ with $95$\% confidence.
This probability will approach the value $1$ as the time horizon gets longer.

\section{Related work}

There are two main approaches for the verification of probabilistic systems. Symbolic probabilistic verification methods \cite{DBLP:books/daglib/0020348}
aim to produce the exact probability of observing a given property by computing or approximating the probabilistic measure. The PRISM tool \cite{kwiatkowska_prism_2011} is known to implement such methods for finite state systems. On the other hand, statistical model-checking (SMC) \cite{smc-survey} generates random runs to produce an estimation of the evaluation of a property. While  symbolic methods can be very costly for exploring the stochastic state space, the latter is much faster at the cost of precision and confidence. However, SMC is known to become costly and imprecise when it comes to verifying rare events, and methods such as importance splitting \cite{jegourel_importance_2013} and importance sampling \cite{jegourel_command-based_2016} have been introduced to solve this issue. 

For stochastic \emph{timed} systems---where both discrete choices and choices of real-time delays are made stochastically---verification is generally undecidable, making SMC indispensable. An exception is the probabilistic semantics for one-clock timed automata proposed in \cite{baier_probabilistic_2007}, where  verification of the very restricted subclass of almost-sure properties was shown to be decidable. 
For networks of general timed automata with an arbitrary number of clocks, a stochastic semantics was given in \cite{DavidLLMPVWFORMATS11}, providing the foundation of UPPAAL's SMC engine \cite{DLLMWCAV11}.

The first and simplest stochastic extension of Petri nets (sPN) is detailed in \cite{florin_stochastic_1991}. Every transition becomes a stochastic component and samples its firing date according to a given distribution, resulting in a race where the transition with the earliest firing
date wins and fires. This semantics is simple to understand, but only handles transition delays without any timing constraints and every sampled date is considered valid. Verification of sPN can be executed, e.g., using the ORIS tool \cite{paolieri_oris_2021}. 
The basic stochastic Petri nets were extended  with immediate transitions and transitions with deterministic delays into the Deterministic and Stochastic Petri nets (DSPN)~\cite{marsan_petri_1987}, which are implemented in tools like TimeNet 4.0~\cite{timenet4}. Hybrid Petri nets with multiple general transitions (HPnGs)~\cite{pilch_statistical_2017} also support these features, but also allow for continuous transitions. The verification of HPnGs can be done using the tool HYPEG (formerly libhpng)~\cite{pilch_hypeg_2017}. Both essentially use the same method for sampling dates, and thus have the same issues with missing time constraints, however, by mixing stochastic transitions with deterministic timed transitions and continuous transitions, this type of stochastic nets can model more complex timed behaviour. 

\begin{figure}[t]
\begin{center}
\begin{tabular}{|| c | c | c ||} 
 \hline
 Semantics & Time Constraints & Type \\ [0.5ex] 
 \hline\hline
 sPN (ORIS) \cite{bause_stochastic_2013} & No & Strong \\ 
 \hline
 DSPN (TimeNet) \cite{timenet4} & No & Strong \\
 \hline
 HPnGs (HYPEG) \cite{pilch_statistical_2017} & No & Strong \\
 \hline
 sTPN (UPPAAL) \cite{cicirelli_qualitative_2015} & Yes & Strong \\
 \hline
 sTAPN (TAPAAL) & Yes & Weak \\
 \hline
\end{tabular}
\end{center}
\caption{Stochastic timed semantics for Petri nets}
\label{table:semantics summary}
\end{figure}

There exists a stochastic extension of the Time Petri net (TPN) semantics, which achieves even more expressive power by associating a time firing interval to each transition \cite{cicirelli_qualitative_2015}. These nets implement the \emph{strong} semantics~\cite{boyer_comparison_2007} that guarantees that a transition must be fired
within its firing interval and the distributions used for the transitions are essentially scaled to match the firing intervals. An implementation using the UPPAAL SMC engine was initially used to perform SMC on these semantics~\cite{cicirelli_qualitative_2015}. In stochastic TPN time, timing constraints are supported
but are limited to the transitions in the net, which cannot be dynamically changed.
On contrary, timed-arc Petri nets (TAPN)~\cite{tapn-intro1,tapn-intro2}, studied in this paper, associate the timing information
to a potentially unbounded number of tokens in the net and input arcs to transition restrict the
ages of tokens that can be consumed by transition firing. Traditionally, TAPNs rely on the
\emph{weak} semantics meaning that it is allowed to perform time delays that can disable currently
enabled transitions. Further extensions with urgent transitions or age invariants~\cite{jacobsen_verification_2011} are needed to enforce urgent behaviour.
Moreover, an extension of the timed-arc Petri net model with only transport arcs already
makes the model more expressive than other types of timed nets and timed automata~\cite{expressiveness-timedPN,expressiveness-overview}.
No stochastic semantics for TAPNs or other Petri net models that includes the weak (nonurgent)
semantics has been given yet. This is, including an efficient implementation of the framework,
the main contribution of this paper. An overview table of selected semantics and respective tools
is given in Figure~\ref{table:semantics summary}.




\section{Timed-Arc Petri Nets}

We shall first introduce the semantics of timed-arc Petri nets without the stochastic attributes.
Let $\mathcal{I}$ be the set of \emph{timed intervals}
of the from $[a,b]$ where $a \in \nats$, $b \in \nats \cup \{\infty\}$ and $a \leq b$.
Let \emph{time invariants} $\mathcal{I}_\fn{inv} \subseteq \mathcal{I}$ be a subset of
timed intervals of the form above where $a=0$.



\begin{definition}[Timed-arc Petri net]
    A timed-arc Petri net (TAPN) is a~7-tuple 
     $   \mathcal{N}=(P, T, \text{IA}, \text{OA}, \fn{Transport}, \fn{Inhib}, \fn{Inv})
    $    where 
    \begin{itemize}
        \item $P$ is a finite set of places,
        \item $T$ is a finite set of transitions such that $P \cap T = \emptyset$,
        \item $\text{IA} \subseteq P \times \mathcal{I} \times \mathbb{N} \times T$ is the set of input arcs that connect a place to a transition and are annotated by a time interval and an arc weight, 
        such that if   
            $(p,I,w,t) \in \text{IA}$ and $(p,I',w',t) \in \text{IA}$ then $I = I'$ and $w = w'$,
        \item $\text{OA} \subseteq T \times \mathbb{N} \times P$ is the set of weighted output arcs that connect transitions to places such that 
          if $(t,w,p) \in \text{OA}$ and  $(t,w',p) \in \text{OA}$ then  $w = w'$, 
        \item $\fn{Transport} \subseteq IA \times OA$ is the set of transport arcs
        such that whenever $((p,I,w,t),(t',w',p')) \in \fn{Transport}$ then
            $t = t'$ and $w = w'$, and 
            if $(\alpha,\beta), (\alpha,\beta') \in \fn{Transport}$ then
            $\beta=\beta'$ and symmetrically if $(\alpha,\beta), (\alpha',\beta) \in \fn{Transport}$
            then $\alpha=\alpha'$,
        \item $\fn{Inhib} \subseteq P \times \mathbb{N} \times T$ is the set of weighted inhibitor arcs, and
        \item $\fn{Inv} : P \rightarrow \mathcal{I}_\fn{inv}$ is the function assigning age invariants to places.
    \end{itemize}
\end{definition}

We note that the definition implies that a given place and a transition cannot be connected by both a normal and a transport arc at the same time.
For a transition $t$, we denote by $\fn{Pre}(t) = \{ p \in P \mid (p,I,w,t) \in \text{IA} \}$ the set of input places,
and by $\fn{Post}(t) = \{ p' \in P \mid (t,w',p') \in \text{OA} \}$ the set of output places.
A marking represents the distribution of tokens together with their ages
across the places in the net.

\begin{definition}[Marking]
    A \emph{marking} $M$ is a finite multiset over $P \times \positivereals$ such that $(p,x)$ represents a token of age $x$ in the place $p$.
    A marking M is \emph{valid} if
    $(p,x) \in M$ implies that $x \in \fn{Inv}(p)$
    for every place $p \in P$.
        We denote the set of all  valid markings by $\mathcal{M}$.
\end{definition}

For a marking $M \in \mathcal{M}$, we use the notation $M(p) = \{ x \mid (p,x) \in M \}$ to denote
the multiset of token ages in the place $p$.

\begin{definition}[Enabled transition]
    A transition $t \in T$ is  \emph{enabled} in a~marking $M$ if there exist multisets of tokens $\mathit{In}
    \subseteq \fn{Pre}(t)\times  \positivereals$ and $\mathit{Out} \subseteq \fn{Post}(t)\times \positivereals$
    such that
    \begin{itemize}
        \item $\mathit{In} \subseteq M$, i.e. the $\mathit{In}$ set contains only tokens present in the marking $M$, 
        \item 
        for every $(p,I,w,t) \in \text{IA}$
        and every $x \in \mathit{In}(p)$ we have $x \in I$ and $|\mathit{In}(p)| = w$, i.e.
        each input arc has exactly $w$ tokens in its input place, all of them satisfying its time guard, 
        \item $|M(p)| < w$ for every $(p,w,t) \in \fn{Inhib}$, i.e. none of the inhibitor arcs has enough tokens in its input place to inhibit the firing of $t$,
        \item 
        for every $((p,I,w,t), (t,w,p')) \in \fn{Transport}$ we have 
        $\mathit{In}(p) = \mathit{Out}(p')$
        and
        $x \in \fn{Inv}(p')$ for every   $x \in \mathit{In}(p)$, 
        i.e. all $w$-many tokens in $p$ can be moved (including their ages) to $p'$ and still satisfy the age invariant of $p'$, and
        \item for any $(t,w',p') \in \text{OA}$ that does not appear in $\fn{Transport}$
        we have $|\mathit{Out}(p')|=w'$ and $x=0$ for every $x \in \mathit{Out}(p')$, i.e.
        every output arc that is not a~transport arc must create exactly $w'$ new tokens of age $0$.
    \end{itemize}
\end{definition}

We denote the set of transitions enabled in a marking $M$ by $\en(M)$.

\begin{definition}[Transition firing]
    If $t$ is enabled in a marking $M$ by the multisets of tokens $In$ and $Out$ then $t$ can \emph{fire} and produce a marking
    \[
        M' = (M \setminus In) \cup Out
    \]
    where $\setminus$ and $\cup$ are operations on multisets, 
    and we write $M \xrightarrow{t} M'$. 
\end{definition}

\begin{definition}[Delay]
    Let $M$ be a marking and let $d \in \positivereals$ be a real number.
    We can delay $d$ time units from $M$ if $x+d \in \fn{Inv}(p)$ for all $(p,x) \in M$,
    and write $M \xrightarrow{d}$ to indicate that a delay is possible. If the delay
    is possible, we write $M \xrightarrow{\delta} M[d]$,  where $M[d]$ is a marking defined by
    $M[d] = \{ (p, x + d) \mid (p,x) \in M \}$.
    
\end{definition}

A \emph{run} of a TAPN from an initial marking $M_0$ is an alternating sequence
of time delays and transition firings such that
$$M_0 \xrightarrow{d_0} M_0[d_0] \xrightarrow{t_0} M_1 \xrightarrow{d_1}
M_1[d_1] \xrightarrow{t_1} M_2 \xrightarrow{d_2} M_2[d_2] \xrightarrow{t_2} \ldots$$

A run is \emph{maximal} if it is either infinite or ends in a deadlock, i.e. in a marking
$M$ such that for any possible delay $d$ we have $\en(M[d]) = \emptyset$. By $\fn{runs}(\mathcal{N})$ we denote the set of all maximal runs of $\mathcal{N}$.

%

\subsection{Logic for Reasoning about Runs}

We shall now define a logic that allows us to argue about marking properties.
A \emph{marking property} $\varphi$ is a Boolean combination of atomic propositions
of the form $p \bowtie n$ where $p \in P$, $n \in \nats$ and 
$\bowtie\ \in \{<, \le, =, \ge, >\}$ with the obvious semantics
that a marking $M$ satisfies an atomic property $p \bowtie n$ if and only if
$|M(p)| \bowtie n$, i.e. the number of tokens in a place $p$ satisfies
the constraint imposed by the atomic proposition. This is naturally extended
to Boolean connectives and we write $M \models \varphi$ if a given marking $M$
satisfies the marking property $\varphi$.
Let $\Phi$ be the set of all marking properties. 

An example
of marking property is $p_1 \geq 5 \wedge p_2 =1$ which holds in a marking
that contains at least five tokens in $p_1$ (irrelevant of their ages)
and exactly one token in the place $p_2$. Note that properties that involve
the ages of the tokens can be encoded by introducing ``monitoring'' transitions
with appropriate time intervals and checking their enabledness.

A run of a TAPN satisfies the (reachability) formula $F \varphi$ if
it contains a marking $M$ such that $M \models \varphi$, and
it satisfies the formulate $G \varphi$ if every marking $M$ on the run satisfies
$M \models \varphi$.

\section{Stochastic Timed-Arc Petri Nets}

We shall now define the model of stochastic timed-arc Petri nets (sTAPN).
Let $\mathcal{N} = (P, T, \text{IA}, \text{OA}, \fn{Transport}, \fn{Inhib}, \fn{Inv})$
be a TAPN as defined in the previous section. We extend it with stochastic features by adding three additional functions\footnote{These functions are defined over the domain of real numbers but in the actual implementation are represented as doubles.} assigning (i) density function to transitions (for sampling firing delays), (ii) weights to transition (for the resolution for firing conflicts)
and (iii) firing mode to transitions (for deciding which tokens are used in transition firing).

The function $\fn{density} : T \rightarrow \mathcal{F}(\mathbb{R}, \positivereals)$ assigns a probability density function\footnote{Where for each $t \in T$, the area under the function is one: $\int_{\mathbb{R}} \fn{density}(t)(\theta) d\theta = 1$.} for each transition. It is used to sample firing dates of newly enabled transitions. 
When we sample a delay for a transition $t \in T$ according to $\fn{density}(t)$, we assume that
we choose a random value according to the distribution, except for the case when the sampled value is negative, in which case we return the delay of $0$.

 The function $\fn{weight} : T \rightarrow \positivereals \cup \{ \infty \}$ assigns a weight (priority mass) to each transition. The weights are used in the event of a firing date collision in case several transitions sample the same firing date. Let $T' \subseteq T$
 be the set of transitions that chose the same firing date then the probability of firing $t \in T'$ is given by a weighted uniform choice \[
    \frac{
        \fn{weight}(t)
    }{
        \sum_{t' \in T'} \fn{weight}(t') 
    }
\]
where we postulate that $\frac{\infty}{\infty}=\frac{0}{0}=1$ and $\frac{r}{\infty}=0$ for
any $r \in \positivereals$.
If one of the competing transitions has an infinite weight then it will always win the competition with firing probability $1$, except for the situation where there are several transitions with infinite
weight---in this case we choose uniformly between them. On the other hand, if one of the competing transitions has a zero weight, then it will never be chosen, unless every other competing transition also has a zero weight---in that case we choose again uniformly among the zero-weight transitions.

Finally, the firing mode function $\fn{mode} : T \rightarrow \{ \mathit{Youngest}, \mathit{Oldest}, \mathit{Random} \}$ is a function that determines for each transition
which tokens are consumed when the transition is fired in a given marking.
Assume that $t \in T$ is enabled in $M$ by the multisets of tokens
$\mathit{In}_1, \ldots, \mathit{In}_n$ (enumerating all the possiblities). If $\fn{mode}(t)=\mathit{Youngest}$
then the transition $t$ will be fired using the set $\mathit{In}_i$, $1 \leq i \leq n$,
that minimizes the sum of ages of all tokens in $\mathit{In}_i$ and similarly
if $\fn{mode}(t)=\mathit{Oldest}$ then the sum will be maximized. In case that that there are several
such sets that minimize/maximize the sum, we uniformly choose one such set.
In case
$\fn{mode}(t)=\mathit{Random}$, we select $\mathit{In}_i$ with uniform probability
$i/n$.

Our stochastic TAPN model uses a \textit{single-server policy} \cite{10.1007/11562948_23}, meaning that once a transition fires, it starts its firing process again (if enabled); and an \textit{enabling memory policy} \cite{10.1007/11562948_23}, meaning that a firing date of a transition is stored as long as the transition remains enabled, and the firing date gets forgotten as soon as the transition becomes disabled. Transition firings that both consume and produce a token to some place do not change enabledness of other transitions that also consume from such a place. In other words, transition firing is considered as an atomic event that takes no time.

Figure~\ref{fig:running} shows an example of a stochastic TAPN, demonstrating all the different features of the stochastic extension,
including different density functions based e.g. on normal and exponential distributions, and
weights assigned to transitions, meaning that the transition \emph{Expire} has a firing
priority in case that another of the remaining four transitions (with the default weight $1$) becomes scheduled at the same time point. The stochastic net also shows examples of different firing modes as
the place \emph{Store} can possibly contain several tokens and the transition \emph{Consume1} will
in such case consume a~token of a random age, while \emph{Consume2} will consume a token 
with the smallest available age.




\subsection{Algorithm for Generating Random Runs}

The heart of an SMC algorithm is the generation of random runs from an initial marking $M_0$, until we reach a deadlock, a given time or step (number of transition firings) bound, or a marking that satisfies a given marking property.
The generation of random runs is executed as follows.
\begin{itemize}
    \item  Newly enabled transitions
randomly sample their firing date according to the corresponding density function.
\item 
The net then delays to the next interesting moment where either one of the enabled
transitions is scheduled to fire or the enabledness of transitions changes and their
scheduled firing dates get updated (if a transition becomes disabled, it is unscheduled
and if it becomes newly enabled, we sample according to its density function).
\item If several transitions are scheduled to fire at the same date, we select
the winner according to the function $\fn{weight}$ and fire the transition
using the tokens selected by the function $\fn{mode}$. The winner then resets its firing date.
\item By firing the winning transition, other transitions may become enabled or disabled.
We update their scheduled firing dates accordingly and repeat the whole process until
a marking property is satisfied (in which case we return true) or until we exceed the
given number of steps (transition firings) or the given time horizon. In this case we return false.
\end{itemize}



\begin{algorithm} 
    \caption{$\fn{RandRun}_\mathcal{N}(M_0,\varphi,c,s)$}
    \begin{algorithmic}[1]
        \State \textbf{Input:} Marking $M_0$, property $\varphi$, time bound $c \geq 0$, and steps bound $s \geq 0$
        \State \textbf{Output:} Boolean indicating if a random run generated from $M_0$ contains a marking satisfying $\varphi$ before $c$ time units passed and $k$ transitions were fired
        \smallskip
        \State for all $t \in T$: $\sched(t) \gets$ sample from $\fn{density}(t)$ if $t \in \en(M_0)$, else $\infty$
	\State $\fn{AccumulatedDelay} \gets 0$;  $\fn{AccumulatedSteps} \gets 0$;  $M \gets M_0$
	\While{$\fn{AccumulatedDelay} \le c \text{ and } \fn{AccumulatedSteps} \le s$}
        \State \algorithmicif \ $M \models \varphi$ \algorithmicthen \ \Return $\fn{true}$  
            \State select the smallest positive delay $d$, $0< d \in \positivereals$,  satisfying  \begin{enumerate}
                \item[(i)] $\en(M) \ne \en(M[d])$,  or 
                \item[(ii)] $\exists \delta > 0. \forall \varepsilon \in (0,\delta]. \ \en(M) \ne \en(M[d + \varepsilon])$. \label{line:nextUpperBound}
            \end{enumerate} \label{nextPoint}

             \If {no such $d$ exists and $\sched(t) = \infty$ for all $t \in T$} \label{ifDeadlock}
             \State \Return $\fn{false}$ \Comment{We are in a deadlock} 
             \Else 
             \State $d \gets \min \{\ d, \ \min\limits_{t \in T} \sched(t) \ \}$ \Comment{Select an earliest interesting delay} \label{smallestDelay}
            \EndIf
            
            \State $M \gets M[d]$ \Comment{Advance to the next interesting time point}
            \State $\fn{AccumulatedDelay} \gets \fn{AccumulatedDelay} + d$
            \For{$t \in T$ s.t. $t \in \en(M) \wedge \sched(t)=\infty$} \label{resample} \Comment{Schedule newly enabled transitions} 
            \State sample a value $v$ from $\fn{density}(t)$
            and set $\sched(t)\gets v+d$
            \EndFor
            \State $\fn{candidates} \gets \{ t \in T \mid \sched(t) - d = 0 \}$ \Comment{Transitions that can fire} \label{makeCandidates}
            \If{$\fn{candidates}$ is not empty}
                \State randomly select $t_\fn{fired} \in \fn{candidates}$ using the weight function $\fn{weights}$
                \State let $M \xrightarrow{t_\fn{fired}} M'$ according to $\fn{mode}(t_\fn{fired})$; $M \gets$ $M'$ \label{line:fire}
                \State $\sched(t_\fn{fired}) \gets \infty$ \Comment{The executed transition is unscheduled}
                \State $\fn{AccumulatedSteps} \gets \fn{AccumulatedSteps} + 1$
            \EndIf \label{endFire}
            \For{$t \in T$} \Comment{Update the scheduled firing times of each $t \in T$} \label{updateDatesBegin}
                \If{$t \notin \en(M)$}
                    \State $\sched(t) \gets \infty$
                \Else 
                    \State $\sched(t) \gets \begin{cases}
                        \sched(t) - d &\text{ if } \sched(t) \ne \infty \\
                        \text{sample from } \fn{density}(t) &\text{ otherwise}
                    \end{cases}$
                    \State $\sched(t) \gets 0$ if $\sched(t) < 0$ \label{line:negative check}
                \EndIf \label{updateDatesEnd}
                \If{$\sched(t) > 0 \text{ and } \exists \delta > 0. \forall \varepsilon \in (0,\delta]. \ t \notin \en(M[\varepsilon])$} \label{detectUpper}
                    \State $\sched(t) \gets \infty$ \Comment{Unschedule $t$ as it will get disabled}
                \EndIf
            \EndFor
	\EndWhile
	\State \Return $\fn{false}$
    \end{algorithmic}
\end{algorithm}

The random run generation is formally described in Algorithm~\ref{alg:randomrun}. The algorithm works as follows.
\begin{enumerate}
    \item While the accumulated delay and steps are under the specified bounds, we repeat until reaching a marking satisfying the given marking property, or a deadlock: \begin{enumerate}
        \item We look for the next smallest interesting delay, which is a date where the enabledness of some transition changes (line \ref{nextPoint}). In particular the condition (ii) expresses the fact
        that when an age of a token reaches the upper bound of a time interval on some arc, an arbitrarily
        small delay can disable a currently enabled transition.
        \item If we cannot find such delay, and there are no scheduled transition firings, we reached a~deadlock and terminate (line \ref{ifDeadlock}).
        \item Otherwise, we delay up to the minimum between the smallest interesting delay and the earliest scheduled firing date (line \ref{smallestDelay}). We also sample
        the firing dates for the newly enabled transitions after 
        the delay $d$ at line~\ref{resample} (note that we add $d$ to the sampled dates as the transition enabledness dates are not shifted by $d$ yet).
        \item If more than one transition is scheduled to fire at this date, we randomly choose a winner $t_\fn{fired}$ using the $\fn{weight}$ function, and we fire it by selecting a multiset of consumed tokens according to $\fn{mode}(t_\fn{fired})$ (lines \ref{makeCandidates} to \ref{endFire}).
        \item We update the scheduled firing dates of each transition: disabled transitions are scheduled at $\infty$, the firing date of scheduled transitions is shifted according to the delay and newly enabled transitions are sampled according to their density functions (lines \ref{updateDatesBegin} to \ref{updateDatesEnd}). The check at line~\ref{line:negative check} 
        truncates negative sampled values to zero.
        \item We check if any future-scheduled transition will get disabled after an arbitrary small delay and unschedule such a transition (line \ref{detectUpper}).
    \end{enumerate}
    \label{alg:randomrun}
\end{enumerate}

We shall first point out that we do not need urgent transitions (once an urgent transition is enabled time cannot elapse) in the stochastic semantics, as urgency can be simulated by
Dirac density function which always samples the value $0$ and hence time cannot elapse
as long as at least one such urgent transition is enabled. Contrary to non-stochastic
semantics, age invariants in places cannot be used to enforce urgency. 
Once a token age in a place
reaches the invariant upper bound, the execution of the random run will
end in a deadlock (unless some transition is actually scheduled to fire at that time).

\subsection{Examples of Random Runs}
In order to better understand the details of the random run generation algorithm, we shall
now exemplify it on a number of examples presented in Figure~\ref{fig:net examples}. In these example, the default $[0,\infty)$ time intervals as well as the default weight 1 and the default random firing mode are omitted.

\begin{figure}[t]
     \begin{subfigure}[b]{0.45\textwidth}
        \centering
        \begin{tikzpicture}[font=\scriptsize, xscale=0.3, yscale=0.3, x=0.5pt, y=0.5pt]
\tikzstyle{arc}=[->,>=stealth,thick]
\tikzstyle{transportArc}=[->,>=diamond,thick]
\tikzstyle{inhibArc}=[->,>=o,thick]
\tikzstyle{every place}=[minimum size=6mm,thick]
\tikzstyle{every transition} = [fill=black,minimum width=2mm,minimum height=5mm]
\tikzstyle{every token}=[fill=white,text=black]
\tikzstyle{sharedplace}=[place,minimum size=7.5mm,dashed,thin]
\tikzstyle{sharedtransition}=[transition, fill opacity=0, minimum width=3.5mm, minimum height=6.5mm,dashed]
\tikzstyle{urgenttransition}=[place,fill=white,minimum size=2.0mm,thin]
\tikzstyle{uncontrollabletransition}=[transition,fill=white,draw=black,very thick]
\tikzstyle{globalBox} = [draw,thick,align=left]
\node[place, label={[align=left,label distance=0cm]90:$p_0$}] at (348,-392) (P0) {};
\node at (348.0,-392.0){0};
\node[place, label={[align=left,label distance=0cm]90:$p_1$}] at (1044,-392) (P1) {};
\node[transition, label={[align=left,label distance=0cm]90:$t_0$\\$ \mathit{uniform}(0,5)$}] at (696,-392) (T0) {};
\draw[arc,pos=0.5] (P0) to node[bend right=0,auto,align=left,below] {$\mathit{}$ $\mathit{[3,5]}$ } (T0);
\draw[arc,pos=0.5] (T0) to node[bend right=0,auto,align=left] {} (P1);

\end{tikzpicture}
        \vspace{1cm}
        \caption{Single transition}
        \label{fig:single transition}
     \end{subfigure}
     \hfill
     \begin{subfigure}[b]{0.45\textwidth}
        \centering
        \begin{tikzpicture}[font=\scriptsize, xscale=0.3, yscale=0.3, x=0.5pt, y=0.5pt]
    \tikzstyle{arc}=[->,>=stealth,thick]
    \tikzstyle{transportArc}=[->,>=diamond,thick]
    \tikzstyle{inhibArc}=[->,>=o,thick]
    \tikzstyle{every place}=[minimum size=6mm,thick]
    \tikzstyle{every transition} = [fill=black,minimum width=2mm,minimum height=5mm]
    \tikzstyle{every token}=[fill=white,text=black]
    \tikzstyle{sharedplace}=[place,minimum size=7.5mm,dashed,thin]
    \tikzstyle{sharedtransition}=[transition, fill opacity=0, minimum width=3.5mm, minimum height=6.5mm,dashed]
    \tikzstyle{urgenttransition}=[place,fill=white,minimum size=2.0mm,thin]
    \tikzstyle{uncontrollabletransition}=[transition,fill=white,draw=black,very thick]
    \tikzstyle{globalBox} = [draw,thick,align=left]
    \node[place, label={[align=left,label distance=0cm, xshift=3mm]270:$p_0$\\$\mathit{Inv:}$ $\mathit{\leq}$ $\mathit{30}$ }] at (495,-180) (P0) {};
    \node at (495.0,-180.0){0};
    \node[place, label={[align=left,label distance=0cm]180:$p_1$}] at (180,-450) (P1) {};
    \node[place, label={[align=left,label distance=0cm]0:$p_2$}] at (810,-450) (P2) {};
    \node[transition, label={[align=left,label distance=0cm]90:$t_0$\\$\mathit{uniform}(0,30)$}] at (180,-180) (T0) {};
    \node[transition, label={[align=left,label distance=0cm]90:$t_1$\\$\mathit{uniform}(0,30)$}] at (810,-180) (T1) {};
    \draw[arc,pos=0.5] (P0) to node[bend right=0,auto,align=left] {$\mathit{}$ $\mathit{[0,15]}$ } (T0);
    \draw[arc,pos=0.5] (T0) to node[bend right=0,auto,align=left] {} (P1);
    \draw[arc,pos=0.5] (P0) to node[bend right=0,auto,align=left, below] {$\mathit{}$ $\mathit{[10,35]}$ } (T1);
    \draw[arc,pos=0.5] (T1) to node[bend right=0,auto,align=left] {} (P2);
    
\end{tikzpicture}
        \caption{Simple race}
        \label{fig:simple race}
     \end{subfigure}

     \begin{subfigure}[b]{0.45\textwidth}
        \centering
\begin{tikzpicture}[font=\scriptsize, xscale=0.2, yscale=0.3, x=0.5pt, y=0.5pt]
\tikzstyle{arc}=[->,>=stealth,thick]
\tikzstyle{transportArc}=[->,>=diamond,thick]
\tikzstyle{inhibArc}=[->,>=o,thick]
\tikzstyle{every place}=[minimum size=6mm,thick]
\tikzstyle{every transition} = [fill=black,minimum width=2mm,minimum height=5mm]
\tikzstyle{every token}=[fill=white,text=black]
\tikzstyle{sharedplace}=[place,minimum size=7.5mm,dashed,thin]
\tikzstyle{sharedtransition}=[transition, fill opacity=0, minimum width=3.5mm, minimum height=6.5mm,dashed]
\tikzstyle{urgenttransition}=[place,fill=white,minimum size=2.0mm,thin]
\tikzstyle{uncontrollabletransition}=[transition,fill=white,draw=black,very thick]
\tikzstyle{globalBox} = [draw,thick,align=left]
\node[place, label={[align=left,label distance=0cm]90:$p_0$}] at (765,-495) (P0) {};
\node at (735.0,-515.0){0};
\node at (735.0,-475.0){0};
\node at (795.0,-515.0){0};
\node at (795.0,-475.0){0};
\node[transition, label={[align=left,label distance=0cm]0:$t_1$\\$\mathit{constant(3.0)}$ \\$\mathit{W=1.0}$}] at (1080,-675) (T0) {};
\node[transition, label={[align=left,label distance=0cm]180:$t_2$\\$\mathit{constant(3.0)}$ \\$\mathit{W=0.0}$ }] at (450,-675) (T1) {};
\node[transition, label={[align=left,label distance=0cm]0:$t_0$\\$\mathit{constant(3.0)}$ \\$\mathit{W=4.0}$ }] at (1080,-315) (T2) {};
\node[transition, label={[align=left,label distance=0cm]180:$t_3$\\$\mathit{constant(3.0)}$ \\$\mathit{W=\infty}$ }] at (450,-315) (T3) {};
\draw[arc,pos=0.5] (P0) to node[bend right=0,auto,align=left] {$\mathit{}$  } (T2);
\draw[arc,pos=0.5] (P0) to node[bend right=0,auto,align=left] {$\mathit{}$  } (T0);
\draw[arc,pos=0.5] (P0) to node[bend right=0,auto,align=left] {$\mathit{}$  } (T1);
\draw[arc,pos=0.5] (P0) to node[bend right=0,auto,align=left] {$\mathit{}$ } (T3);
\end{tikzpicture}
        \caption{Date collision}
        \label{fig:collision ex}
     \end{subfigure}
     \hfill
     \begin{subfigure}[b]{0.45\textwidth}
        \centering
\begin{tikzpicture}[font=\scriptsize, xscale=0.4, yscale=0.4, x=0.5pt, y=0.5pt]
\tikzstyle{arc}=[->,>=stealth,thick]
\tikzstyle{transportArc}=[->,>=diamond,thick]
\tikzstyle{inhibArc}=[->,>=o,thick]
\tikzstyle{every place}=[minimum size=6mm,thick]
\tikzstyle{every transition} = [fill=black,minimum width=2mm,minimum height=5mm]
\tikzstyle{every token}=[fill=white,text=black]
\tikzstyle{sharedplace}=[place,minimum size=7.5mm,dashed,thin]
\tikzstyle{sharedtransition}=[transition, fill opacity=0, minimum width=3.5mm, minimum height=6.5mm,dashed]
\tikzstyle{urgenttransition}=[place,fill=white,minimum size=2.0mm,thin]
\tikzstyle{uncontrollabletransition}=[transition,fill=white,draw=black,very thick]
\tikzstyle{globalBox} = [draw,thick,align=left]
\node[place, label={[align=left,label distance=0cm]90:$p_0$}] at (1035,-405) (P0) {};
\node[transition, label={[align=left,label distance=0cm]90:$t_0$\\$\mathit{constant(3.0)}$ }] at (765,-405) (T0) {};
\node[transition, label={[align=left,label distance=0cm]90:$t_1$\\$\mathit{constant(5.0)}$ \\$\mathit{M=\ ?}$ }] at (1305,-405) (T1) {};
\draw[arc,pos=0.5] (T0) to node[bend right=0,auto,align=left] {} (P0);
\draw[arc,pos=0.5] (P0) to node[bend right=0,auto,align=left, below] {$\mathit{}$ $\mathit{[3,\infty)}$ } (T1);

\end{tikzpicture}
        \vspace{0.5cm}
        \caption{Firing mode}
        \label{fig:firing mode ex}
     \end{subfigure}

     \begin{subfigure}[b]{0.45\textwidth}
        \centering
\begin{tikzpicture}[font=\scriptsize, xscale=0.2, yscale=0.3, x=0.5pt, y=0.5pt]
\tikzstyle{arc}=[->,>=stealth,thick]
\tikzstyle{transportArc}=[->,>=diamond,thick]
\tikzstyle{inhibArc}=[->,>=o,thick]
\tikzstyle{every place}=[minimum size=6mm,thick]
\tikzstyle{every transition} = [fill=black,minimum width=2mm,minimum height=5mm]
\tikzstyle{every token}=[fill=white,text=black]
\tikzstyle{sharedplace}=[place,minimum size=7.5mm,dashed,thin]
\tikzstyle{sharedtransition}=[transition, fill opacity=0, minimum width=3.5mm, minimum height=6.5mm,dashed]
\tikzstyle{urgenttransition}=[place,fill=white,minimum size=2.0mm,thin]
\tikzstyle{uncontrollabletransition}=[transition,fill=white,draw=black,very thick]
\tikzstyle{globalBox} = [draw,thick,align=left]
\node[place, label={[align=left,label distance=0cm]180:$\mathit{p_0}$}] at (720,-540) (P0) {};
\node at (720.0,-540.0){0};
\node[place, label={[align=left,label distance=0cm]90:$\mathit{p_1}$}] at (1395,-540) (P1) {};
\node[transition, label={[align=left,label distance=0cm]0:$\mathit{t_0}$\\$\mathit{constant(1.0)}$ \\$\mathit{W=\infty}$ }] at (720,-225) (T0) {};
\node[transition, label={[align=left,label distance=0cm, xshift=5mm]270:$\mathit{t_1}$\\$\mathit{constant(1.0)}$ }] at (1035,-540) (T1) {};
\draw[arc,pos=0.5] (P0)  .. controls(850.0,-382.5) .. (780, -295) to node[bend right=0,auto,align=left] {} (T0);
\draw[arc,pos=0.5] (T0)  .. controls(590.0,-382.5) .. (660,-470) to node[bend right=0,auto,align=left] {} (P0);
\draw[arc,pos=0.5] (P0) to node[bend right=0,auto,align=left] {} (T1);
\draw[arc,pos=0.5] (T1) to node[bend right=0,auto,align=left] {} (P1);

\end{tikzpicture}
        \caption{Atomic firing}
        \label{fig:atimic firing}
     \end{subfigure}
     \hfill
     \begin{subfigure}[b]{0.45\textwidth}
        \centering
\begin{tikzpicture}[font=\scriptsize, xscale=0.2, yscale=0.3, x=0.5pt, y=0.5pt]
\tikzstyle{arc}=[->,>=stealth,thick]
\tikzstyle{transportArc}=[->,>=diamond,thick]
\tikzstyle{inhibArc}=[->,>=o,thick]
\tikzstyle{every place}=[minimum size=6mm,thick]
\tikzstyle{every transition} = [fill=black,minimum width=2mm,minimum height=5mm]
\tikzstyle{every token}=[fill=white,text=black]
\tikzstyle{sharedplace}=[place,minimum size=7.5mm,dashed,thin]
\tikzstyle{sharedtransition}=[transition, fill opacity=0, minimum width=3.5mm, minimum height=6.5mm,dashed]
\tikzstyle{urgenttransition}=[place,fill=white,minimum size=2.0mm,thin]
\tikzstyle{uncontrollabletransition}=[transition,fill=white,draw=black,very thick]
\tikzstyle{globalBox} = [draw,thick,align=left]
\node[place, label={[align=left,label distance=0cm]270:$\mathit{p_0}$}] at (945,-495) (P0) {};
\node at (945.0,-495.0){0};
\node[place, label={[align=left,label distance=0cm]90:$\mathit{p_1}$}] at (585,-135) (P1) {};
\node[place, label={[align=left,label distance=0cm]90:$\mathit{p_2}$}] at (1665,-495) (P2) {};
\node[transition, label={[align=left,label distance=0cm]0:$\mathit{t_0}$\\$\mathit{constant(1.0)}$ \\$\mathit{W=\infty}$ }] at (945,-135) (T0) {};
\node[transition, label={[align=left,label distance=0cm]270:$\mathit{t_2}$\\$\mathit{constant(0.0)}$ }] at (585,-495) (T1) {};
\node[transition, label={[align=left,label distance=0cm]90:$\mathit{t_1}$\\$\mathit{constant(1.0)}$ }] at (1305,-495) (T2) {};
\draw[arc,pos=0.5] (P0) to node[bend right=0,auto,align=left] { } (T0);
\draw[arc,pos=0.5] (T0) to node[bend right=0,auto,align=left] {} (P1);
\draw[arc,pos=0.5] (P1) to node[bend right=0,auto,align=left] {} (T1);
\draw[arc,pos=0.5] (T1) to node[bend right=0,auto,align=left] {} (P0);
\draw[arc,pos=0.5] (P0) to node[bend right=0,auto,align=left] { } (T2);
\draw[arc,pos=0.5] (T2) to node[bend right=0,auto,align=left] {} (P2);

\end{tikzpicture}
        \caption{Sequential firing}
        \label{fig:sequential firing}
     \end{subfigure}
        \caption{Examples of stochastic timed-arc Petri nets}
        \label{fig:net examples}
\end{figure}


Let us first consider the net from
Figure~\ref{fig:single transition}. Initially, $t_0$ is not enabled so we
will delay to the next interesting event, which is after $3$ time units 
where $t_0$ becomes enabled. At this moment, we sample a~delay $d$ uniformly from
the interval $[0,5]$. If $3+d \leq 5$ then the next interesting delay is
$3+d$ where we fire $t_0$ and reach a deadlock. If $3+d >5$ then the next
interesting delay is $2$ time units and the token in $p_0$ reaches the age $5$.
As $t_0$ is not yet scheduled to fire at this point and after arbitrarily small delay
$t_0$ becomes disabled, we unschedule the firing of $t_0$ as the check at line \ref{detectUpper} of the algorithm succeeds. The run then deadlocks.


A race behaviour between two transitions is shown in Figure~\ref{fig:simple race}.
Initially, $t_0$ is enabled and it chooses a firing date $d_0$ uniformly from the 
interval $[0,30]$. If $d_0 \leq 10$ then $t_0$ will fire as the next interesting delay
is $d_0$ where $t_0$ is the only enabled transition.
If $d_0 > 10$ then at $10$ units the transition $t_1$ will also sample
its firing delay $d_1$ from the interval $[0,30]$. Now if $d_0-10 < d_1$ and $d_0 \leq 15$ then
the transition $t_0$ wins the race. Should $d_0-10 > d_1$ then $t_1$ wins
the race and fires. If none of the transitions fired until 15 time units, $t_0$ gets unscheduled. Depending on the sampled delay $d_1$, either $t_1$ fires provided
that $d_1+10 \leq 30$, or the whole net deadlocks if $d_1+10>30$ due
to the invariant on $p_0$ that blocks any time progress once the token in $p_0$
reaches the age $30$. Note that the~firing date collision of $t_0$ and $t_1$
is unlikely as the probability that the transition delays are sampled so that $d_0 = d_1 + 10$ is zero.


However, in case of constant distributions, collision of firing dates is possible
as demonstrated in Figure~\ref{fig:collision ex}. As there are four tokens of age $0$
in the initial marking, each of the four transitions can fire. The scheduled firing
date of each transition will be $3$ and we need to resolve the probability in 
which order they fire. As the weight of $t_3$ is infinity, it will be always
the first transition to fire. After firing $t_3$, three transitions remain enabled.
Transition $t_0$ will fire next with the probability $4/5$ and transition
$t_1$ will fire with the probability $1/5$. After both $t_0$ and $t_1$ fired,
$t_2$ will fire last as its weight is $0$ and it could not compete with $t_0$ and $t_1$.


The example in Figure~\ref{fig:firing mode ex} shows the influence of the firing mode (depicted by $M = \mathit{mode}$ where $\mathit{mode} \in \{\mathit{Random}, \mathit{Youngest}, \mathit{Oldest} \}$) on executions. The transition $t_0$ produces a token every 3 time units. Once the token in $p_0$ reaches the age $3$, also $t_1$ becomes enabled and will fire after 5 time units. At that point, there will be tokens of ages 8, 5, and 2 in $p_0$, and $t_1$ needs to choose which one of the token ages satisfying the time interval $[3, \infty)$ it consumes during the firing. If $\fn{mode}(t_1) = \mathit{Youngest}$ then $t_1$ select the token of age 5. If $\fn{mode}(t_1) = \mathit{Oldest}$, it select the token of age 8. If $\fn{mode}(t_1) = \mathit{Random}$ then $t_1$ uniformly chooses between the tokens of ages $5$ and $8$.

Finally, Figures~\ref{fig:atimic firing} and~\ref{fig:sequential firing} demonstrate
a difference when a transition gets unscheduled during the random run generation.
In Figure~\ref{fig:atimic firing} the transitions $t_0$ and $t_1$ get scheduled at time $1$
as they both sample from the constant distribution which always returns $1$.
As the weight of $t_0$ is $\infty$ and the default weight of $t_1$ is $1$,
the transition $t_0$ fires first, resets the age of the token in $p_0$ to $0$,
and samples its next firing date (which is again after $1$ time unit). However,
after the firing $t_0$, the transition $t_1$ is still enabled so its scheduled
firing date does not change and it fires (without any further delay) immediately
after $t_0$. The reader may wonder that during the firing of $t_0$, the transition
$t_1$ got temporarily disabled and it should be resampled too. If such behaviour
is desirable then it can be modelled as shown in Figure~\ref{fig:sequential firing}.
In this case the probability that the transition $t_1$ fires is zero, as it gets always
resampled after any firing of $t_0$ (which always wins the race as its weight is $\infty$).

\subsection{Quantitative and Qualitative SMC Algorithms}

We shall now present two SMC algorithm. The first one estimates the probability
of the event that a~random run satisfies a given marking property (\emph{quantitative estimation}),
and the second one tests
a~hypothesis whether this probability is larger than or equal
to a given constant (\emph{qualitative estimation}). The main reason for introducing
qualitative estimation is that it often requires us to execute a~significantly smaller
number of random runs compared to the quantitative estimation.

\begin{algorithm}[t]
    \caption{$\fn{Quantitative\ Probability\ Estimation}_\mathcal{N}(M_0, \varphi, c, s, \rho, \epsilon)$}
    \label{alg:smc}
    \begin{algorithmic}[1]
        \State \textbf{Input:} Marking $M_0$, property $\varphi$, time bound $c \geq 0$, steps bound $s \geq 0$, confidence $0 < \rho < 1$, precision $0 < \epsilon < 1$
        \State \textbf{Output:} The probability $\pm \epsilon$ that $\varphi$ is satisfied with confidence $\rho$ in no more than $c$ time units and $s$ steps
        \smallskip
        \State $N \gets \frac{\ln(2/(1 - \rho))}{2 \epsilon^2}$ \Comment{The number of runs to execute}
        \State $x \gets 0$
        \For{$i = 1$ to $N$}
            \State $x \gets x + \fn{RandRun}_\mathcal{N}(M_0, \varphi, c, s)$ \Comment{Consider true as 1 and false as 0}
        \EndFor
        \State \Return $\frac{x}{N}$
    \end{algorithmic}
\end{algorithm}

The statistical model checking algorithm presented in Algorithm~\ref{alg:smc} uses the random run generator to perform Monte-Carlo simulations for \emph{quantitative probability estimation} \cite{david_stochastic_2014}. It uses the Chernoff-Hoeffding bound to compute, for a given
precision $\epsilon$ and a confidence level $\rho$,
the number of runs $N$ to be executed in the SMC algorithm.
Now, a confidence-interval estimating process---that 
executes $N$ runs and returns the proportion of $\varphi$-satisfying runs---will have probability larger than $\rho$ of returning a value that is no more than $\epsilon$ away from the  real (unknown) probability of $\varphi$ being satisfied.

\begin{algorithm}[t]
    \caption{$\fn{Qualitative\ Estimation}_\mathcal{N}(M_0, \varphi, c, s, p_t, \delta,  \alpha, \beta)$}
    \label{alg:qual_smc}
    \begin{algorithmic}[1]
        \State \textbf{Input:} Marking $M_0$, property $\varphi$, time bound $c \geq 0$, steps bound $s \geq 0$, the probability bound for the test $0 \leq p_t \leq 1$, indifference region width $\delta$, probability of false positive $0 < \alpha < 1$, probability of false negative $0 < \beta < 1$
        \State \textbf{Output:} A boolean indicating if $\prob(\varphi) \geq p_t$ is satisfied with the probability of false positives at most $\alpha$ and the probability of false negatives  at most $\beta$ in no more than $c$ time units and $s$ steps
        \smallskip
        \State $p_0 \gets p_t + \delta$ and $p_1 \gets p_t - \delta$
        \State $r \gets 0$
        \While{$\fn{true}$}
            \State $x \gets \fn{RandRun}_\mathcal{N}(M_0, \varphi, c, s)$
            \State $r \gets r + x \log\left(\frac{p_1}{p_0}\right) + (1-x)\log\left(\frac{1-p_1}{1-p_0}\right)$ \Comment{Consider true as 1 and false as 0}
            \State If $r \leq \log\left(\frac{\beta}{1-\alpha}\right)$ then \Return $\fn{true}$
            \State If $r \geq \log\left(\frac{1-\beta}{\alpha}\right)$ then \Return $\fn{false}$
        \EndWhile
    \end{algorithmic}
\end{algorithm}

The second statistical model checking algorithm presented in Algorithm~\ref{alg:qual_smc} uses the random run generator to perform a {sequential probability ratio test} for \emph{qualitative hypothesis tests}~\cite{david_stochastic_2014, wald-sequential}. As demonstrated in \cite{wald-sequential}, the algorithm terminates with probability one, 
and returns a Boolean value indicating if the test is verified
under the probability of false positives $\alpha$ (the probability of rejecting the hypothesis $\prob(\phi) \geq p_0$ when it is true) 
and the probability of false negatives $\beta$ (the probability of accepting $\prob(\phi) \leq p_1$ when it is false). The indifference region $[p_1 ; p_0]$ (here centered around $p_t$ with the given indifference region width $\delta$) is the region of probabilities that are not considered relevant to deduce a result for the test.

    \section{Induced Probabilistic Semantics}
    \label{sec:measure-corr}

    The random run generation for a stochastic TAPN $\mathcal{N}$, introduced 
    in Algorithm~\ref{alg:randomrun}, 
    defines an~uncountable set of possible runs and 
    induces a probabilistic measure $\prob_{\mathcal{N}}$ on infinite sets of runs of the $\sigma$-algebra generated by cylinders of runs  of the form $$\pi = I_0 . t_0 . \dots . I_n . t_n \in (\mathcal{I}.T)^*$$ where 
    $I_i \in \mathcal{I}$ is a time interval and $t_i \in T$ is a transition for all $i$, $0 \leq i \leq n$.
    Given a run $\rho = M \xrightarrow{d} M[d] \xrightarrow{t} \rho' \in \fn{runs}(\mathcal{N})$ and a cylinder $\pi = I.t'.\pi'$, we write $\rho \in \pi$ if $d \in I$,  $t = t'$ and $\rho' \in \pi'$; for the base case of the empty cylinder $\epsilon$ we postulate that $\rho \in \varepsilon$ for any run $\rho$. In other words, the prefix of the concrete run must agree on the transition firings with the cylinder and the concrete delays must belong to the intervals in the cylinder.

To define the probability measure for our model, a state in a stochastic TAPN can no longer be reduced to a marking $M$. A state is now a pair $(M,\sched)$ where $M$ is a marking and $\sched : T \rightarrow \positivereals \cup \{ +\infty \}$ is a function mapping transitions to their scheduled firing dates as used in Algorithm~\ref{alg:randomrun}.

Our aim is to define the probability
$\prob_\net((M,\sched), \pi)$ that a random run $\rho$ generated according
to Algorithm~\ref{alg:randomrun} belongs to the cylinder $\pi$, i.e. $\rho \in \pi$.
For the case where the cylinder does not restrict anything ($\pi=\epsilon$), we define
$\prob_\net((M,\sched), \epsilon)=1$. Otherwise, we assume that
$\pi = I.t.\pi'$ for some $\pi' \in (\mathcal{I}.T)^*$. In order to define
$$\prob_\net((M,\sched), I.t.\pi')$$ we distinguish two cases.

{\bf CASE A.} If there is $t \in T$ such that $\sched(t)=0$ and $\sched(t) \not= \infty$
for every $t \in \en(M)$, we are ready to execute transition firing as follows.
We define 
$\prob_\net\big((M,\sched), I.t.\pi'\big) = 0$
if either $0 \not\in I$ or $\sched(t)>0$.
Otherwise, we define
\begin{equation} \label{eq:fire}
    \prob_\net\big((M,\sched), I.t.\pi'\big) = 
    \frac{\fn{weight}(t) \cdot \sum\limits_{(M',\sched') \in \fn{Next}(t)} \frac{\prob_\net\big((M',\sched'), \pi'\big)}{|\fn{Next}(t)|}}{\sum\limits_{t' \in \en(M), \sched(t')=0} \fn{weight}(t') }
\end{equation}
where $\fn{Next}(t)=\{(M', \sched') \mid M \xrightarrow{t} M' 
\text{ and } \sched'(t)=\infty \text{ and } \sched'(t')=\sched(t') 
\text{ for all } t' \in T \smallsetminus \{t\}\}$
returns all the possible markings that can be reached
by firing the transition $t$ (depending on its firing mode)
together with the updated schedule where the currently fired
transition $t$ gets unscheduled.
Equation~(\ref{eq:fire}) assumes that $\fn{weight}(t) \in \positivereals \smallsetminus{0}$. The cases where $\fn{weight}(t)=0$
or $\fn{weight}(t)=\infty$ are defined analogously
(we choose uniformly among the scheduled transitions with
weight $\infty$; the same is true for weight $0$ in case no other transition with
a higher weight is scheduled, otherwise the returned
probability is $0$).

{\bf CASE B.} Let us now consider the case where either 
(i) there is no transition $t \in T$ such that $\sched(t)=0$, or
(ii) there is some $t \in \en(M)$
where $\sched(t)=\infty$.
In case (i) we compute the minimum delay $d$ until some
 transition becomes scheduled to fire or where transition enabledness changes
as follows:
 \[
        d = \min(
            \min_{t_i \in \en(M)} \sched(t_i), 
            \inf \{ d \in \positivereals \mid \en(M) \ne \en(M[d]) \}  
        ) \ .
    \]
This closely follows the computation of minimum delay in Algorithm~\ref{alg:randomrun} at line~\ref{nextPoint}.
If $d=\infty$, i.e. no transition is scheduled to fire and the enabledness of transitions never changes, we reached a deadlock and postulate $\prob_\net((M,\sched), I.t.\pi')=0$. 
In case (ii) we need to schedule some currently enabled transition before we can perform a delay 
and we hence set $d=0$.

For both case (i) and (ii), we define the set
of schedules $\fn{SCH}(d)$ such that
$(M[d], \sched')$ can be reached from
$(M,\sched)$ by a delay of duration $d$ as follows: 
$\sched' \in \fn{SCH}(d)$ if and only if
\begin{itemize}
    \item[a)]  $\sched'(t) \in \positivereals$  for every $t \in T$ where 
    $t \in \en(M[d])$ and 
    $\sched(t) = \infty$,  
    \item[b)] $\sched'(t) = \infty$ for every $t \in T$ where $t \not\in \en(M[d])$, or
$\sched(t) > d$ and at the same time there is $\delta > 0$ such that $t \not\in \en(M[d+\epsilon])$
    for every $\epsilon$, $0 < \epsilon < \delta$, and
    \item[c)] $\sched'(t)=\sched(t) - d$ for every other $t \in T$ not covered by the cases a) and b).
\end{itemize}

We note that in case c) we can safely shift the scheduled time by $d$
for every transition that is enabled, scheduled and either
continues to be enabled also right after the delay $d$ or
it will be scheduled at time $d$. Should a transition that
is scheduled strictly after $d$ time units become disabled
right after the delay $d$, we unschedule it in case b).

Let
\begin{equation} \label{eq:density}
        D({\sched'}) = \prod_{t \in \en(M[d]) \wedge \sched(t)=\infty} \fn{density}(t)(\sched'(t)) 
    \end{equation}
be a density function defined on the set $\fn{SCH}(d)$.
Now the probability measure is defined recursively as follows:
 \begin{equation} \label{eq:delay}
        \prob_\net\big((M,\sched), I.t.\pi'\big) = 
                \int_{s \in \fn{SCH}(d)} D(s) \cdot \prob\big((M[d],s), (I-d).t.\pi'\big) \diff s
    \end{equation}
for the case where $d \in I$, otherwise
$\prob_\net((M,\sched), I.t.\pi') = 0$.
Here $I - d$ for an interval $I=[a,b]$ stands for the interval 
$[\max\{0,a-d\}, b-d]$. Notice that Case B is applicable
only finitely many times in a row
due to the fact that enabledness changes after at least $1$ time unit (unless there
is an earlier scheduled transition firing that is handled by Case A), hence the recursive definition is well-defined.

Finally, we define the probability of a trace starting from the initial marking $M_0$ to belong to the cylinder $\pi$ by 
$\prob_\net(M_0, \pi) =  \prob_\net\big((M_0,\sched_0),\pi\big)$ where
$\sched_0(t)=\infty$ for all $t \in T$.

Observing that $\prob_\net((M,\sched), \pi)$ decreases when extending the cylinder $\pi$, it follows from classical concepts of probability theory (see e.g. \cite{ash2000}) that $\prob_\net$ extends uniquely to the smallest $\sigma$-algebra, $\mathfrak{S}^\net$, generated by the above cylinders.


\begin{theorem} \label{theo:prob_measure}
    Given a stochastic TAPN $\net$, $\prob_\net$ is a probability measure over the $\sigma$-algebra $\mathfrak{S}^\net$ generated by the set of cylinders of the form $(\mathcal{I}.T)^*$.
\end{theorem}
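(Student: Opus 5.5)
The plan is to apply the Carathéodory extension theorem, in the form recalled in the paper via \cite{ash2000}. It suffices to show that the cylinder function $\pi \mapsto \prob_\net(M_0,\pi)$ extends to a finitely additive, $\sigma$-additive premeasure on the semiring generated by cylinders, with total mass $1$ on the empty cylinder $\epsilon$. We already have $\prob_\net((M,\sched),\epsilon)=1$ by definition. So the work is to establish three facts, each by induction on the length of $\pi$, using the recursive equations (\ref{eq:fire}) and (\ref{eq:delay}).

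\textbf{Step 1: bounds and measurability.} By induction on $|\pi|$, I will show that $0 \le \prob_\net((M,\sched),\pi) \le 1$. I will also show that, as a function of the schedule $\sched$ restricted to finite values, $\prob_\net((M,\sched),\pi)$ is Borel measurable, so that the integral in (\ref{eq:delay}) is well defined. In Case A, the weights divided by the sum of the weights of the competing transitions form a probability vector, and the average over $\fn{Next}(t)$ is a convex combination. In Case B, $D$ is a product of probability densities. Negative samples are truncated to $0$, which I treat as adding an atom at $0$ of mass $\int_{-\infty}^0 \fn{density}(t)$. Hence $D$ integrates to $1$ over $\fn{SCH}(d)$. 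The finite nesting of Case B noted after (\ref{eq:delay}) ensures that the induction is well founded.

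\textbf{Step 2: consistency (finite additivity).} The key identity is
\[
\sum_{t' \in T} \prob_\net((M,\sched),\pi.[0,\infty).t') \;\le\; \prob_\net((M,\sched),\pi),
\]
with equality except on deadlocking continuations. Deadlocking runs are maximal runs of $\fn{runs}(\net)$ that lie in no proper extension. I would add them as the terminal events and verify that they receive exactly the missing mass. In addition, I need additivity in the interval argument: for a disjoint union $I = I_1 \uplus I_2$ we have $\prob(\cdot,I.t.\pi') = \prob(\cdot,I_1.t.\pi')+\prob(\cdot,I_2.t.\pi')$. This follows because the delay $d$ in Case B is determined by the state, so exactly one of the indicators $d\in I_j$ fires, and Case A only tests $0\in I$. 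Both facts are proved by induction, pushing the sum inside the integral of (\ref{eq:delay}) and the sum of (\ref{eq:fire}).

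\textbf{Step 3: $\sigma$-additivity.} This is the main obstacle. I would follow the standard approach for stochastic timed automata (as in \cite{DavidLLMPVWFORMATS11}): write each cylinder event as a finite-dimensional set determined by the first $n$ delays and transitions. Then $\prob_\net$ restricted to length-$n$ cylinders is the image of a genuine product-like measure (a Markov kernel composition) on $(\positivereals\times T)^n$. Kolmogorov/Ionescu-Tulcea consistency then yields $\sigma$-additivity. Concretely, I would define the one-step kernel $K((M,\sched),\cdot)$ on $\positivereals \times T \times$ states from Cases A and B, and check that it is a sub-probability kernel. Measurability in $\sched$ is the delicate point, because the enabledness boundaries introduce discontinuities on sets of finitely many hyperplanes determined by integer interval endpoints. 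Once Ionescu-Tulcea gives a unique measure on infinite sequences agreeing with $\prob_\net$ on cylinders, uniqueness on $\mathfrak{S}^\net$ follows from the $\pi$-$\lambda$ theorem, since the cylinders form a $\pi$-system. Together with total mass $1$, this gives that $\prob_\net$ is a probability measure.
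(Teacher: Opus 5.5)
Your plan (measurability in $\sched$, the atom at $0$ from truncation, explicit deadlock mass, and Ionescu--Tulcea for $\sigma$-additivity) is more complete than the paper's proof. The paper only shows, by a nested induction over Cases~A and~B, that cylinder values lie in $[0,1]$ and do not increase under extension, and then appeals to the countability of cylinders for a unique extension. However, the interval-additivity claim in your Step~2 rests on an argument that fails.

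In Case~B, $d$ is only the delay to the next interesting point, not the total delay before $t$ fires. The rest of that delay is handled by recursing on $I-d$, so the mass of $I.t.\pi'$ is not carried by the piece $I_j$ that contains $d$. With the guard $d\in I$ read literally, the identity is false. Take the net of Figure~\ref{fig:single transition}, with the recursion extended in the obvious way to half-open intervals, as your semiring requires. Case~B first delays by $d=3$, so the recursion gives $\prob_\net(M_0,[3,4].t_0)=1/5$. It gives $0$ for $[3,3].t_0$, and also $0$ for $(3,4].t_0$ simply because $3\notin(3,4]$. Yet $[3,4]=[3,3]\uplus(3,4]$, and firing $t_0$ after a total delay in $(3,4]$ has probability $1/5$.

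The fix is to read the guard as $d\le\sup I$, i.e.\ $I-d\neq\emptyset$. Under the literal guard $d\in I=[a,b]$, the $\max\{0,a-d\}$ in the definition of $I-d$ would be vacuous, which suggests $d<a$ was meant to be allowed. With this reading, both pieces' guards may pass, so it is not true that exactly one indicator fires. The correct argument is $(I_1\uplus I_2)-d=(I_1-d)\uplus(I_2-d)$, followed by the inner induction on consecutive Case~B steps down to Case~A, whose test $0\in I$ is additive over disjoint pieces.

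The same issue is the real content of your Step~3. To show that the Ionescu--Tulcea measure agrees with $\prob_\net$ on cylinders, you must show that the recursion computes the probability that the accumulated delay before $t$ lies in $I$. That is exactly what breaks with the literal guard. Once you prove this agreement by the induction above, additivity is inherited from the path measure and Step~2 becomes redundant. Finally, to get a semiring you must admit half-open intervals, since the closed intervals of $\mathcal{I}$ are not closed under differences.
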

\begin{proof}
Let $\pi \in (\mathcal{I}.T)^*$ be a cylinder.
    By induction on $|\pi|$, we prove that for all states $(M,\sched)$ 
    the probability $\prob_\mathcal{N}((M,\sched),\pi) \in [0,1]$ and 
    for any extended cylinder $\pi.I.t$ with an additional interval and transition
    we have $\prob_\mathcal{N}((M,\sched),\pi.T.t) \le \prob_\mathcal{N}((M,\sched),\pi)$.
    
    For the base case $|\pi| = 0$ the property clearly holds. 
    For the inductive case, let $\pi = I.t.\pi'$.
    There are two cases according to the nature of the next delay:
    \begin{itemize}
        \item If there is $t \in T$ such that $\sched(t)=0$ and $\sched(t) \not= \infty$
for every $t \in \en(M)$, we apply Equation~\ref{eq:fire}.
By the induction hypothesis we know that
$\prob_\mathcal{N}((M',\sched'),\pi') \in [0,1]$ which implies
that also $\prob_\mathcal{N}((M,\sched),\pi) \in [0,1]$ as the contributions
of $\prob_\mathcal{N}((M',\sched'),\pi')$ are normalized by the number
of successor states $|\fn{Next}(t)|$ and because 
$\fn{weight}(t) \leq \sum\limits_{t' \in \en(M), \sched(t')=0} \fn{weight}(t')$.
Clearly, 
extending $\pi$ with an additional interval and transition firing cannot
increase the probability as $\prob_\mathcal{N}((M',\sched'),\epsilon) = 1$.
\item In the other case where either 
(i) there is no transition $t \in T$ such that $\sched(t)=0$, or
(ii) there is some $t \in \en(M)$
where $\sched(t)=\infty$, we apply Equation~\ref{eq:delay} 
that is shifting the interval $I$ by some delay $d$ into
$I - d$. As we already argued, this shifting before a transition
firing becomes enabled can happen only finitely many times and we apply
here an inner induction on the number of shifting steps needed before
a transition firing becomes enabled. If no more shifting is required, 
the property holds as argued in the previous case (transition firing).
Otherwise, we by the inner induction hypothesis
assume that $\prob\big((M[d],s), (I-d).t.\pi'\big) \in [0,1]$.
We also observe that $D$ as defined by Equation~\ref{eq:density} over the 
schedules in $\fn{SCH}(d)$ is a density
function and satisfies that $\int_{s \in \fn{SCH}(d)} D(s) \diff s = 1$.
This together with the inner induction hypothesis implies 
that $\prob_\mathcal{N}((M,\sched),\pi) \in [0,1]$.
As in the previous case, extending $\pi$ to a longer cylinder cannot increase
the probability.
    \end{itemize} 
    
    We have now defined a way to assign a probability to each set of runs described by a cylinder. As there is a countable number of cylinders, there exists an unique way to expand this probability to the $\sigma$-algebra of runs generated by such cylinders.  
\end{proof}

Equally important fact is that our properties $\finally \varphi$ and $\globally \varphi$ define measurable sets of runs, thus having well-defined probabilities as postulated by the next theorem.

\begin{theorem}
    Given a marking property $\varphi \in \Phi$, the formulae $\finally \varphi$ and $\globally \varphi$ describe sets of runs belonging to the $\sigma$-algebra $\mathfrak{S}^\net$, thus having well-defined probabilities $\prob_\mathcal{N}(\finally \varphi)$ and $\prob_\mathcal{N}(\globally \varphi)$.
\end{theorem}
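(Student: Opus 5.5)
The plan is to express $\finally\varphi$ as a countable union of cylinders, and then obtain $\globally\varphi$ as its complement. The key observation is that along a run $M_0 \xrightarrow{d_0} M_0[d_0] \xrightarrow{t_0} M_1 \xrightarrow{d_1} \cdots$, delays do not change token counts: $|M[d](p)| = |M(p)|$ for every $p$. Since a marking property $\varphi$ only refers to token counts, $M_i \models \varphi$ iff $M_i[d_i]\models\varphi$. Satisfaction of $\varphi$ along a run is therefore determined entirely by the discrete markings $M_0, M_1, M_2, \ldots$.

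Next, I would show that for a fixed initial marking the token counts of $M_n$ depend only on the transition sequence $t_0 \ldots t_{n-1}$. The ages in the runs may differ, but each firing of $t$ removes exactly $w$ tokens from each input place and adds exactly $w'$ tokens to each output place, by the definition of enabledness. Transport arcs also preserve counts, since $|\mathit{In}(p)| = |\mathit{Out}(p')| = w$. Hence there is a function $c(t_0\ldots t_{n-1})$ giving the count vector, and $M_n\models\varphi$ is decided by the word $t_0\ldots t_{n-1}$ alone.

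With this in hand, let $W_\varphi \subseteq T^*$ be the set of words $\sigma$, including the empty word, whose resulting count vector satisfies $\varphi$. Let $\pi_\sigma$ denote the cylinder $[0,\infty).t_0.\cdots.[0,\infty).t_{n-1}$ (this needs $[0,\infty)\in\mathcal{I}$, which holds since $b=\infty$ is allowed). A run satisfies $\finally\varphi$ iff it belongs to $\pi_\sigma$ for some $\sigma\in W_\varphi$. For the empty word, $\pi_\varepsilon$ is the set of all runs and corresponds to $M_0\models\varphi$. Thus
\[ \finally\varphi = \bigcup_{\sigma\in W_\varphi} \pi_\sigma , \]
which is a countable union of cylinders because $T^*$ is countable. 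It therefore lies in $\mathfrak{S}^\net$. Since $\globally\varphi$ holds exactly when no marking on the run satisfies $\neg\varphi$, we get $\globally\varphi = \fn{runs}(\net)\setminus \finally\neg\varphi$. This is measurable because $\neg\varphi\in\Phi$ and $\sigma$-algebras are closed under complement. The probabilities are then given by Theorem~\ref{theo:prob_measure}. If the time and step bounds used in the algorithms are also to be covered, I would intersect with the cylinders obtained by restricting the sum of delays; this gives a countable union over rational interval choices of length $n\le s$.

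The main obstacle is the bookkeeping step that identifies "a run contains a marking satisfying $\varphi$" with membership in a cylinder. Care is needed because cylinders are defined only via delay intervals and transition labels, and because finite maximal runs ending in a deadlock must also be handled. For deadlocked runs I would argue that the final marking still appears as some $M_n$ reached after a prefix in the cylinder, so the same union covers them.
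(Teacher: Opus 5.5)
Your proof is correct and follows essentially the same route as the paper: $\finally\varphi$ is written as a countable union of cylinders, using that delays and token ages do not affect token counts, and $\globally\varphi$ is obtained as the complement of $\finally\neg\varphi$. Your version is slightly more careful, since you justify that the count vector depends only on the fired word and you cover $M_0 \models \varphi$ via the empty cylinder, a case the paper's set $C_\varphi$ (which requires $M_0 \not\models \varphi$) misses. The one thing to drop is the optional time-bound remark: cylinders here may only use intervals from $\mathcal{I}$, which have integer endpoints, so ``rational interval choices'' are not available in $\mathfrak{S}^\net$. The statement does not need it.
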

\begin{proof}
Let $\varphi$ be a marking property and we shall argue that the set of all runs
that satisfy $F \varphi$ can be described as a countable union of cylinders
and hence such a set has a well-defined probabilistic measure. To do so, we notice that
for any run $\rho$ that belongs to a cylinder
$$\pi = I_0 . t_0 . \dots . I_n . t_n \in (\mathcal{I}.T)^*$$
the actual delays in the run $\rho$ do not change the number of
tokens in the places in any marking that is reachable from the initial marking 
$M_0$ via the sequence of transition firings $t_0 t_1 \ldots t_n$. 
This leads us to the definition of the set $C_\varphi$ of all cylinders
$I_0 . t_0 . \dots . I_n . t_n$ that satisfy that
there is a run
$$M_0 \xrightarrow{d_0} M_0[d_0] \xrightarrow{t_0} M_1 \xrightarrow{d_1}
M_1[d_1] \xrightarrow{t_1} M_2 \xrightarrow{d_2} M_2[d_2] \xrightarrow{t_2} \ldots
\xrightarrow{t_n} M_{n+1}$$
from the initial marking for some delays $d_0, \ldots, d_n$ such that
$M_{n+1} \models \varphi$ and $M_i \not\models \varphi$ for every
$i$, $0 \leq 1 \leq n$.
This means that any run that belongs to any cylinder from $C_\varphi$
satisfies the formula $F \varphi$ and we can define 
$$\prob_\net(\finally \varphi) = \prob_\net( \bigcup_{\pi \in C_\varphi} 
\pi )$$
as a probabilistic measure of this countable union of cylinders that 
contains exactly all runs satisfying $F \varphi$.
For the formulate $G \varphi$, we define the probability as $\prob_\net(\globally \varphi) = 1 - \prob_\net(\finally \neg \varphi)$.
\end{proof}


The last theorem proves that the computed probabilities indeed make sense, by exhibiting an equivalence with classical TAPNs. 

\begin{theorem} \label{theorem:equivalence}
Let $\mathcal{N}$ be a stochastic TAPN and let $\mathcal{N}_d$ be a normal TAPN net obtained by removing
all stochastic features in $\mathcal{N}$. Let $\varphi$ be a marking property.
    \begin{itemize} 
   
   \item[(i)]\label{eq:ef} If there is no reachable marking in  $\mathcal{N}_d$  that satisfies $\varphi$ then
   $\prob_\mathcal{N}(\finally \varphi) = 0$.
   
    \item[(ii)] \label{eq:ag}
    If all reachable markings in $\mathcal{N}_d$  satisfy $\varphi$ then 
    $\prob_\mathcal{N}(\globally \varphi) = 1$.
    \end{itemize}
\end{theorem}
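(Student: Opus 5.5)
The plan is to show that every run produced with positive probability by the stochastic semantics is, once its stochastic data is erased, a run of $\mathcal{N}_d$. Statement (i) then follows from the proof of the previous theorem, and statement (ii) is obtained from (i) by duality.

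The central step is a soundness lemma, proved by induction on the length of the cylinder $\pi = I_0.t_0.\dots.I_n.t_n$. It says: if $\prob_\net\big((M,\sched),\pi\big) > 0$, then there exist delays $d_i \in I_i$ and a run $M \xrightarrow{d_0} M[d_0] \xrightarrow{t_0} M_1 \cdots \xrightarrow{t_n} M_{n+1}$ in $\mathcal{N}_d$. Moreover, every marking $M_{i+1}$ is obtained from some $\mathit{In}, \mathit{Out}$ that witness enabledness. I will follow the recursive definition case by case.

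In CASE~A, a nonzero value requires $\sched(t)=0$. The scheduling invariant (only enabled transitions carry finite schedules) gives $t \in \en(M)$. Every element of $\fn{Next}(t)$ is a marking $M'$ with $M \xrightarrow{t} M'$, and validity is preserved because transport arcs check $\fn{Inv}(p')$ and fresh tokens have age $0 \in \fn{Inv}(p')$.

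In CASE~B, a nonzero integral forces $d \in I$. I will also check that the delay $d$ is legal in $\mathcal{N}_d$, i.e.\ $x+d' \in \fn{Inv}(p)$ for all $d' \le d$. Here I will argue that $d$ never exceeds the first moment at which an invariant upper bound is hit. If a token sits at its invariant bound, every transition consuming it becomes disabled right after that point. Hence $\en$ changes there, or nothing is enabled, and the infimum in the definition of $d$ stops the delay at that point. This makes $M[d]$ a valid marking.

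The invariant that $\sched(t)<\infty$ implies $t\in\en(M)$ is maintained by rules a)--c) and by the unscheduling in $\fn{Next}$. After a firing, the unchanged schedules belong to transitions whose enabledness is rechecked at the following CASE~B step with $d=0$. This recheck is where I expect the main obstacle, since after firing the schedule $\sched'$ in $\fn{Next}(t)$ is copied without testing enabledness in $M'$. So I would first strengthen the induction to states reachable by the semantics and show that a CASE~A firing with a stale schedule cannot occur. Firing from $M'$ is only reached after the CASE~B condition (ii) or the applicability condition of CASE~A, and that condition requires a finite schedule only for enabled transitions. I will read the CASE~A condition as requiring that every transition with $\sched(t)=0$ is in $\en(M)$, matching line~\ref{updateDatesBegin} of Algorithm~\ref{alg:randomrun}, where disabled transitions are reset to $\infty$.

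With the lemma in hand, (i) follows directly. By the proof of the previous theorem, $\prob_\net(\finally\varphi)=\prob_\net\big(\bigcup_{\pi\in C_\varphi}\pi\big)$, a countable union. By the lemma, each $\pi\in C_\varphi$ with positive measure yields a run of $\mathcal{N}_d$ reaching $M_{n+1}\models\varphi$. This contradicts the hypothesis, so every cylinder in $C_\varphi$ is null, and countable subadditivity gives probability $0$. The initial marking also counts: if $M_0\models\varphi$ it would be reachable, so that case is excluded as well.

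For (ii), if all reachable markings satisfy $\varphi$, then no reachable marking satisfies $\neg\varphi$. Applying (i) to $\neg\varphi$ gives $\prob_\net(\finally\neg\varphi)=0$, and therefore $\prob_\net(\globally\varphi)=1-0=1$ by the definition in the preceding theorem.
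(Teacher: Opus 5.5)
Your plan is the paper's argument made explicit. The paper simply asserts that the stochastic semantics is a ``conservative extension'' of $\mathcal{N}_d$, so that $\finally\varphi$ is the empty set. The step where you actually try to prove soundness fails at invariants. Enabledness, as defined, never consults the invariants of input places, and a token in an invariant-constrained place need not be an input of any transition. So when a token reaches its invariant bound, $\en$ need not change, and the $d$ of CASE~B does not stop there.

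Concretely, let $p$ with $\fn{Inv}(p)=[0,5]$ hold an isolated token of age $0$. Let $t$ move a token of age $0$ from $q$ (guard $[7,\infty)$) to $r$, with $\fn{density}(t)$ uniform on $[0,1]$ and weight $1$. At time $5$ nothing is enabled, yet CASE~B picks $d=7$. Unfolding Equations~(\ref{eq:fire})--(\ref{eq:delay}) then gives $\prob_\net\big((M_0,\sched_0),[7,8].t\big)=1$, although no run of $\mathcal{N}_d$ ever lets time pass beyond $5$ or fires $t$. With guard $[0,\infty)$ and a density on $[6,7]$, the same overshoot happens through $\min\sched$ instead of the infimum.

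So your lemma is false for the definitions as written, and the paper's one-line argument glosses over the same point. To prove it you would first have to amend CASE~B (and the random-run algorithm) to cap $d$ at the largest legal delay $\sup\{d' \mid M\xrightarrow{d'}\}$. That point would then be treated as a deadlock unless a firing is scheduled there, which is what the paper's prose on invariants intends.

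For this theorem, however, you do not need the lemma. By the paper's definition, a cylinder belongs to $C_\varphi$ only if some run of the underlying TAPN from $M_0$ fires $t_0\dots t_n$ and ends in a $\varphi$-marking. Under the hypothesis of (i) no such run exists, so $C_\varphi=\emptyset$ and $\prob_\net(\finally\varphi)=\prob_\net(\emptyset)=0$. This is what the paper's ``measure of the empty set'' remark amounts to. In your own derivation the lemma does no work either, since every $\pi\in C_\varphi$ carries such a run whether or not its measure is positive.

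Your derivation of (ii), applying (i) to $\neg\varphi\in\Phi$ and using $\prob_\net(\globally\varphi)=1-\prob_\net(\finally\neg\varphi)$, is correct.

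Finally, the stale-schedule worry in CASE~A is unnecessary. A positive value of Equation~(\ref{eq:fire}) forces $\fn{Next}(t)\neq\emptyset$, i.e.\ $M\xrightarrow{t}M'$ for some $M'$. So you need neither a reinterpretation of the CASE~A condition nor the invariant $\sched(t)<\infty\Rightarrow t\in\en(M)$. That invariant is in fact false as written, since $\fn{Next}(t)$ copies the schedules of transitions the firing disables.
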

\begin{proof}
For the case (i), as $\varphi$ never holds in any reachable marking and the SMC algorithm
is a conservative extension of reachability on classical TAPNs, there is no SMC run that
satisfies $F \varphi$, implying that the probability $\prob_\mathcal{N}(\finally \varphi)$
corresponds to the measure of the empty set, which is clearly equal to 0.
Similarly, in the case (ii), any marking obtained during the SMC run always satisfies
$\varphi$ and hence $\prob_\mathcal{N}(\globally \varphi)$ is the measure of any
possible run, which is clearly equal to $1$. 
\end{proof}

\section{Implementation in TAPAAL and Case Studies}

The SMC run generation is implemented in C++ as part of the {\tt verifydtapn} engine~\cite{JLSST:NFM:14}
and accepts a PNML description of stochastic TAPN together with the query and other
SMC parameters and it executes Algorithm~\ref{alg:smc} and returns the
computed probability as well as other useful statistics like average run length and duration,
number of satisfying and violating runs and data for plotting cumulative probabilities
and other plots that allow to visualize the average/minimum/maximum number of tokens in places
during the simulations. 
The existing TAPAAL GUI~\cite{DJJJMS:TACAS:12} is extended to allow for the editing of stochastic TAPNs as well as queries,
it communicates with the SMC engine and visualizes the results of SMC verification.
A screenshot using our running producer/consumer example with the window showing the result
with the statistics as well as the cumulative probability is shown in Figure~\ref{fig:tapaal-smc}.

\begin{figure}[t]
    \centering
    \scalebox{0.23}{\includegraphics[]{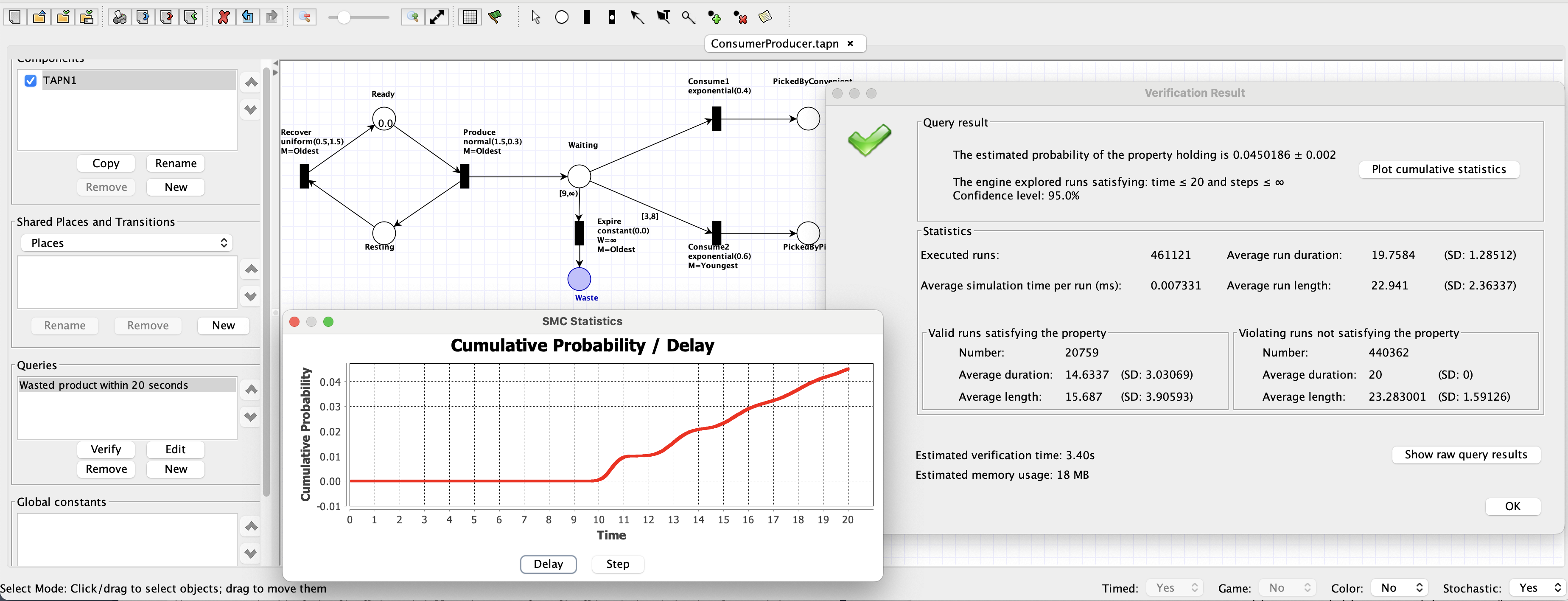}}
    \caption{TAPAAL's GUI with a running example and its SMC verification output}
    \label{fig:tapaal-smc}
\end{figure}

We currently support the uniform and geometric discrete distributions and
dirac delta, uniform, exponential, normal, gamma, triangular and log normal continuous
distributions.
In order to allow the user to sample from an arbitrary distribution, we also
introduce custom distributions where the user provides a list of concrete values that will be sampled on demand during the run generation. 
For the given set of distribution parameters, the GUI also visualizes the
density function.
We support three types of verification: quantitative probability estimation, using a Monte-Carlo algorithm; qualitative probability test, using a sequential probability ratio test; and a mode to generate and simulate traces, which can be either any traces, traces satisfying a property, or traces violating a property. 
The SMC engine has an option to use multi-threading and, if enabled, it executes one run generator per available core and achieves almost a linear speedup with increasing number of CPU cores. Moreover,
we support the estimation of the total running time of the SMC algorithm from the given confidence and precision parameters, or alternatively suggest a precision to match a given running time and confidence level.


We shall now present three case studies demonstrating the capabilities of our tool.
The most recent release of TAPAAL SMC can be downloaded
from \url{https://www.tapaal.net/download/}. The models used in the case studies can be found in other downloads at the bottom of the page.




\subsection{Fireflies}

Firefly is a beetle that is capable of emitting light flashes in periodic intervals. 
In nature, some firefly species like Photinus carolinus~\cite{wiki:Photinus_carolinus} show a remarkable synchronization capabilities so that a population of
fireflies starts, after a while, to flash in synchrony. All this without any leader and in a completely distributed
manner. This behaviour attracted the attention among scientists, trying to explain and model such behaviour
(see e.g.~\cite{fireflies-book,fireflies}). In a simplified version, each firefly requires some amount of time 
in order to charge and be able to flash. The discharge (flash) can then occur after some random delay, unless the firefly
notices a flash from other firefly, in which case it will join the flash (assuming that it is also charged).

\begin{figure}[t]
    \centering
\begin{tikzpicture}[font=\scriptsize, xscale=0.36, yscale=0.36, x=1.33pt, y=1.33pt]
\tikzstyle{arc}=[->,>=stealth,thick]
\tikzstyle{transportArc}=[->,>=diamond,thick]
\tikzstyle{inhibArc}=[->,>=o,thick]
\tikzstyle{every place}=[minimum size=6mm,thick]
\tikzstyle{every transition} = [fill=black,minimum width=2mm,minimum height=5mm]
\tikzstyle{every token}=[fill=white,text=black]
\tikzstyle{sharedplace}=[place,minimum size=7.5mm,dashed,thin]
\tikzstyle{sharedtransition}=[transition, fill opacity=0, minimum width=3.5mm, minimum height=6.5mm,dashed]
\tikzstyle{urgenttransition}=[place,fill=white,minimum size=2.0mm,thin]
\tikzstyle{uncontrollabletransition}=[transition,fill=white,draw=black,very thick]
\tikzstyle{globalBox} = [draw,thick,align=left]
\node[place, label={[align=left,label distance=0cm]90:$\mathit{waiting}$\\$\mathit{[fireflies]}$ \\$\mathit{1'fireflies.all}$ }] at (270,-60) (waiting) {};
\node at (265.0,-56.0){$\bullet$};
\node at (265.0,-65.0){$\bullet$};
\node at (275.0,-56.0){$\bullet$};
\node at (275.0,-65.0){$\bullet$};

\node[place, label={[align=left,label distance=0cm,xshift=2mm]93:$\mathit{charging}$\\$\mathit{[fireflies]}$ }] at (390,-165) (charging) {};
\node[place, label={[align=left,label distance=0cm,xshift=4mm]94:$\mathit{charged}$\\$\mathit{[fireflies]}$ }] at (630,-165) (charged) {};
\node[place, label={[align=left,label distance=0cm]270:$\mathit{flashing}$ }] at (870,-165) (flashing) {};
\node[transition, label={[align=left,label distance=0cm, yshift=2mm]270:$\mathit{arrive}$\\$\mathit{uniform(0,10)}$ }] at (270,-165) (arrive) {};
\node[transition, label={[align=left,label distance=0cm,yshift=0.5mm,xshift=5mm]270:$\mathit{ready}$\\$\mathit{normal(2,0.5)}$ }] at (510,-165) (ready) {};
\node[transition, label={[align=left,label distance=0cm]90:$\mathit{flash\_alone}$\\$\mathit{exponential(0.2)}$ }] at (735,-45) (flash_alone) {};
\node[transition, label={[align=left,label distance=0cm]270:$\mathit{flash\_jointly}$\\$\mathit{constant(0.0)}$ }] at (630,-255) (flash_jointly) {};
\node[urgenttransition] at (flash_jointly.center) { };
\node[transition, label={[align=left,label distance=0cm]90:$\mathit{all\_done}$\\$\mathit{constant(0.0)}$ }] at (750,-165) (all_done) {};
\node[urgenttransition] at (all_done.center) { };
\draw[arc,pos=0.5] (waiting) to node[bend right=0,align=right,xshift=-3mm] {$\mathit{}$ $\mathit{1'x}$ } (arrive);
\draw[arc,pos=0.5] (arrive) to node[bend right=0,auto,align=left] {\\$\mathit{1'x}$ } (charging);
\draw[arc,pos=0.5] (charging) to node[bend right=0,auto,align=left] {$\mathit{}$ $\mathit{1'x}$ }
node[bend right=0,below,align=left] {$\mathit{}$ $[1,\infty)$ }(ready);
\draw[arc,pos=0.5,xshift=0mm] (ready) to node[bend right=0,auto,align=left] {\\$\mathit{1'x}$ } (charged);
\draw[arc,pos=0.5] (charged) to node[bend right=0,align=left,xshift=-2mm,yshift=2mm] {$\mathit{}$ $\mathit{1'x}$ } (flash_alone);
\draw[arc,pos=0.5] (charged) to node[bend right=0,auto,align=right,xshift=-7mm] {$\mathit{}$ $\mathit{1'x}$ } (flash_jointly);
\draw[arc,pos=0.5] (flash_alone) to node[bend right=0,auto,align=left] { } (flashing);
\draw[arc,pos=0.5,bend right=20] (flash_alone)  
to node[bend right=20,align=left,xshift=-9mm] {\\$\mathit{1'x}$ } (charging);
\draw[arc,pos=0.5, bend left=10] (flashing)  
to node[bend right=0,auto,align=left] {} (flash_jointly);
\draw[arc,pos=0.5, bend left=10] (flash_jointly) to node[bend right=0,auto,align=left] {} (flashing);
\draw[arc,pos=0.5,bend left=20] (flash_jointly)  
to node[bend right=0,auto,align=left,yshift=2mm] {\\$\mathit{1'x}$ } (charging);
\draw[arc,pos=0.5] (flashing) to node[bend right=0,auto,align=left] {} (all_done);
\draw[inhibArc,pos=0.5] (charged) to node[bend right=0,auto,align=left] {} (all_done);
\draw[inhibArc,pos=0.5, bend right=41] (flashing) 
to node[bend right=0,auto,align=left] {} (flash_alone);
\node [globalBox,xshift=40mm,yshift=-10mm] (globalBox) at (current bounding box.north west) [anchor=south west] {Color Types:\\$\mathit{fireflies}$ \textbf{is} $\mathit{[1,4]}$\\\\Variables:\\ $\mathit{x \textbf{ in } fireflies}$};
\end{tikzpicture}
    \caption{Colored stochastic Petri net model of fireflies}
    \label{fig:fireflies}
\end{figure}

In Figure~\ref{fig:fireflies} we present a stochastic timed-arc Petri net model of the fireflies flashing behaviour.
We use here the well-known colored extension of Petri nets where tokens can carry additional information
(in our case the id of the firefly). Initially, there are four fireflies (tokens) in the place \emph{waiting}, having ids one to four and initially of age $0$. With uniform distribution between 0 to 10 seconds, the fireflies independently arrive to the place \emph{charging}. Here each firefly must be charged before the transition \emph{ready} moves it to the charged location. The charging takes at least one second due to the
time interval $[1,\infty)$ on the arc. After the transition \emph{ready} is enabled, an additional delay is sampled from a normal distribution with mean value of 2 seconds and a standard deviation of 0.5.
Once a firefly arrives to the place \emph{charged}, there are two situations. Either the place \emph{flashing} has no token, meaning that there is currently no one emitting a flash. In this case the firefly waits an amount of time sampled from the exponential distribution of rate 0.2 (corresponding to the mean value of 5 seconds) and once it is schedule to fire the transition \emph{flash\_alone}, it returns to the charging location. As a side-effect, a token (uncolored) is placed to the place \emph{flashing}, which disables other fireflies from flashing alone. A token in the place \emph{flashing} indicates that other fireflies in the place \emph{charged} should join the flash immediately. This is done by firing the transition \emph{flash\_jointly} which samples from the constant distribution which always returns the value 0. In other words, the transition \emph{firing} is scheduled immediately and the transition hence becomes urgent (indicated by the circle in the middle of the transition). As soon as there are no further fireflies in the place \emph{charged}, the urgent transition \emph{all\_done} becomes enabled and removes the token from the place \emph{flashing}.

\begin{figure}[t]
    \centering
    \scalebox{0.27}{\includegraphics[]{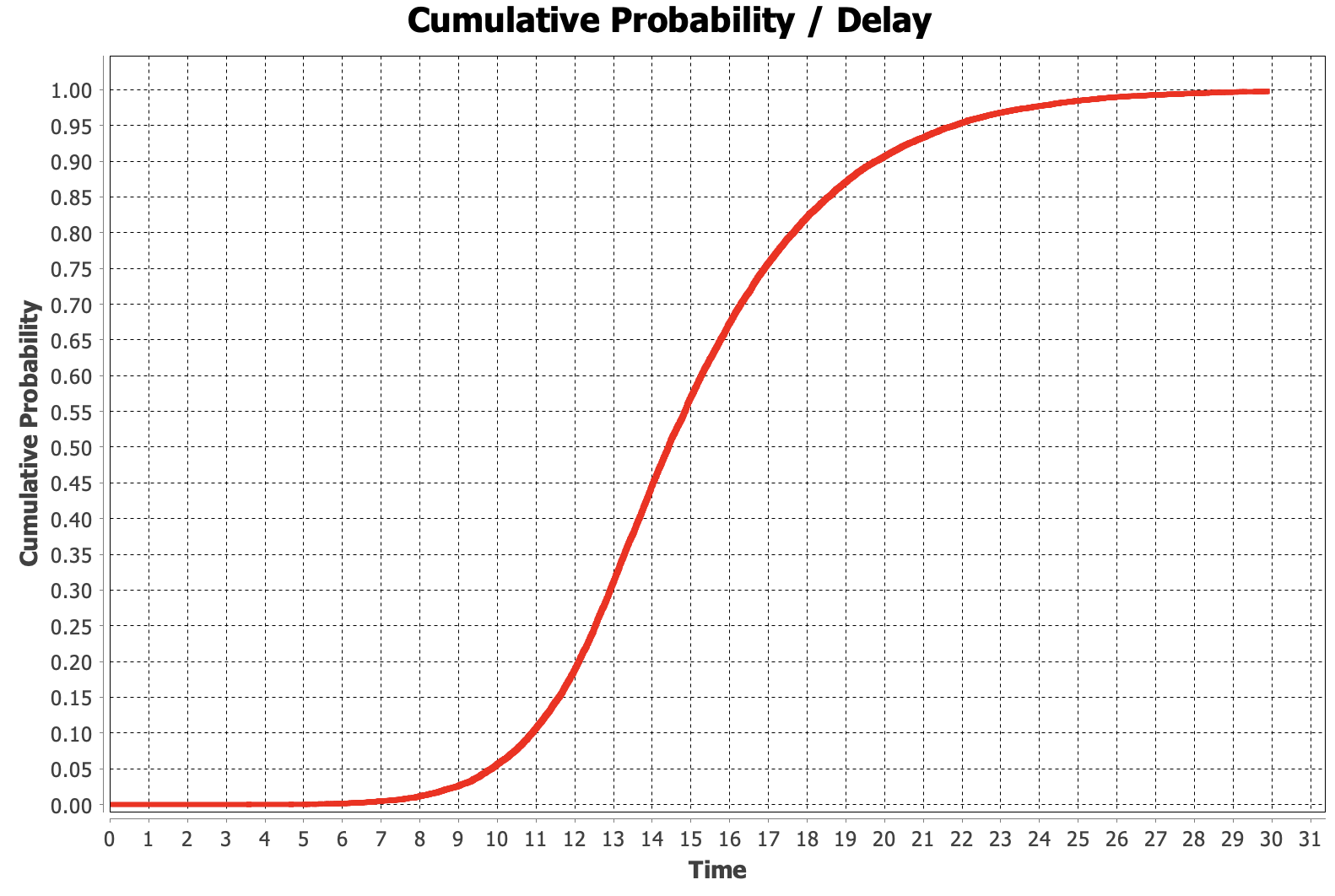}}~~~
    \scalebox{0.27}{\includegraphics[]{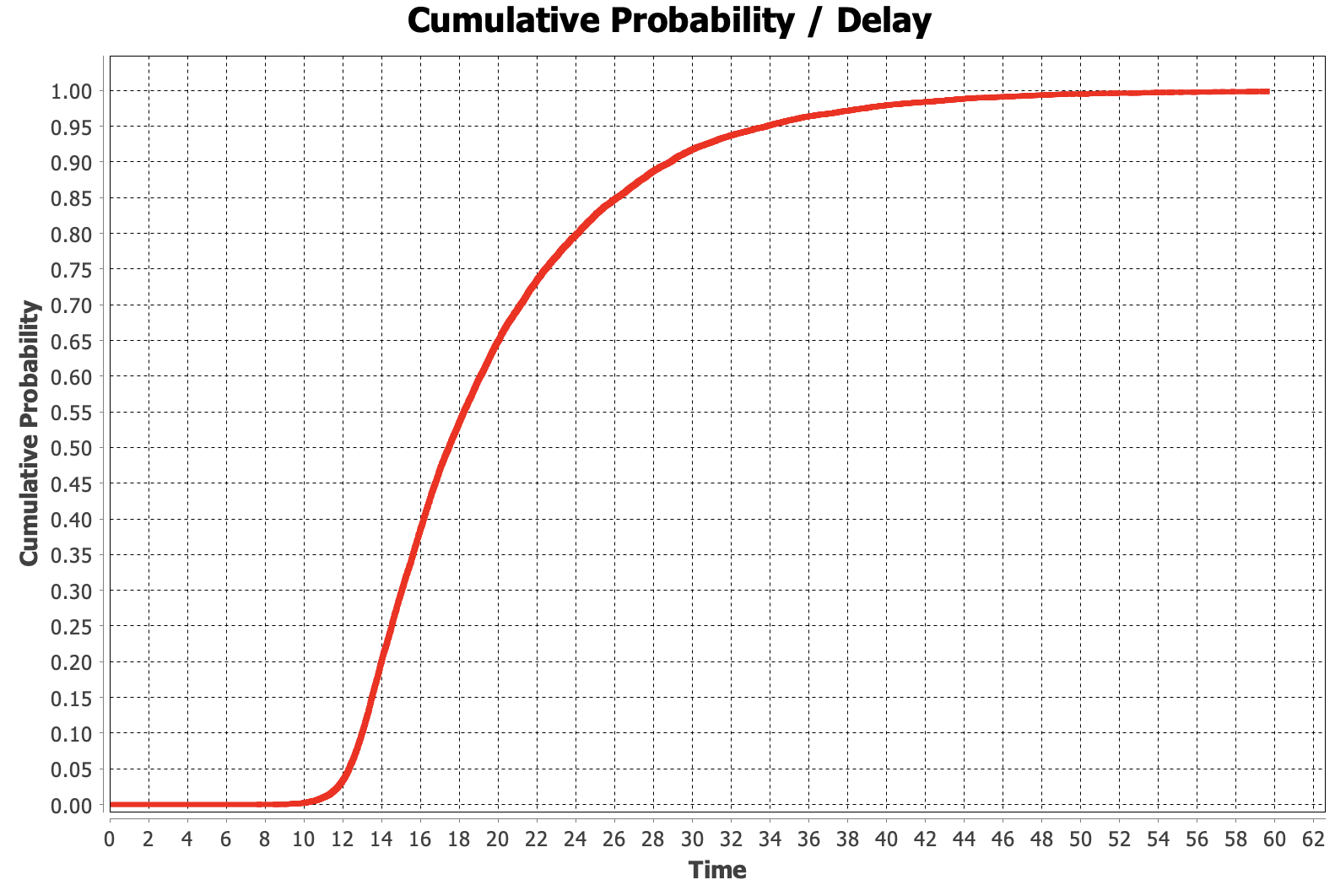}}
    \caption{Cumulative synchronization probability for 4 (left) and 10 (right) fireflies}
    \label{fig:fireflies-plot}
\end{figure}

In our tool, a colored Petri net is unfolded~\cite{CGSS:PERFORMANCE:20,BJPST:FI:23} to the standard timed-arc Petri net without colors by creating a~copy of each place for each color and a~copy of each transition for each binding of the variable $x$ to different fireflies ids. The probability distributions of each transition are then simply 
overtaken by the unfolded copies of the transition. 

We can now ask about the probability that all fireflies synchronize within 30 seconds. This can be formulated by the query 
\begin{center}
{\tt F (charging = 1 and flashing = 1 and waiting = 0)} 
\end{center} 
requiring that we reach a situation where there are no fireflies in the place \emph{waiting}, exactly one firefly in the place \emph{charging} and a token in \emph{flashing}, indicating that a single firefly initiated a flash and all other fireflies (that are in the place \emph{charged}) are joining in a synchronous flash.
Our tool executes 18456 runs to estimate that the probability of synchronization within 30 seconds
is $0.997$ with $95$\% confidence and precision $0.01$. If we increase the number of fireflies to 10, the probability
of synchronous flashing within 30 seconds
drops to $0.917$ and we need at least 60 seconds to achieve the synchronization probability $0.998$.
Cumulative probability plots for 4 and 10 fireflies, produced by our tool, are depicted in Figure~\ref{fig:fireflies-plot}.

\subsection{Frequency Spectrum As signment in Elastic Optical Networks}

Elastic all-optical networks~\cite{all-opt-net} allow for fine-grained resource allocation technologies 
in order to schedule network traffic demands on different light frequencies inside a single
optical fiber. When a new demand arrives, the spectrum allocation problem is to find
for each optical fiber a sequence of available frequency slots that will carry the demand.
The frequency slots must be contiguous~\cite{rsa-problem} (follow each other) and the number of required slots depends on the modulation
scheme~\cite{hideki-modulation}. After some time, a demand can be released, making the frequency slots allocated
for the demand available again. This allocation/deallocation process can create a fragmentation in the allocated
frequency slots, possibly resulting in a situation when a newly arrived demand is blocked
(there are not enough available consecutive frequency slots to accommodate the demand).

\begin{figure}[t]
    \centering
   \scalebox{0.75}{ \begin{tikzpicture}[font=\scriptsize, xscale=0.45, yscale=0.45, x=1.33pt, y=1.33pt]
\tikzstyle{globalBox} = [draw,thick,align=left]
\node [globalBox,xshift=0mm] (globalBox) at (0,0)  {
Color Types:\\$\mathit{Modulation}$ \textbf{is} $[\textit{16QAM, 8QAM, QPSK}]$\\$\mathit{Slots}$ \textbf{is} $\mathit{[0,20]}$\\$\mathit{SpectrumSize}$ \textbf{is} $\mathit{[3,6]}$\\$\mathit{Occupied}$ \textbf{is} [$\mathit{Slots, SpectrumSize}$]\\\\Variables:\\ $\mathit{mod \textbf{ in } Modulation}$\\$\mathit{s \textbf{ in } Slots}$};
\end{tikzpicture} }\\
    \scalebox{0.72}{

\begin{tikzpicture}[font=\scriptsize, xscale=0.45, yscale=0.45, x=1.33pt, y=1.33pt]
\tikzstyle{arc}=[->,>=stealth,thick]
\tikzstyle{transportArc}=[->,>=diamond,thick]
\tikzstyle{inhibArc}=[->,>=o,thick]
\tikzstyle{every place}=[minimum size=6mm,thick]
\tikzstyle{every transition} = [fill=black,minimum width=2mm,minimum height=5mm]
\tikzstyle{every token}=[fill=white,text=black]
\tikzstyle{sharedplace}=[place,minimum size=7.5mm,dashed,thin]
\tikzstyle{sharedtransition}=[transition, fill opacity=0, minimum width=3.5mm, minimum height=6.5mm,dashed]
\tikzstyle{urgenttransition}=[place,fill=white,minimum size=2.0mm,thin]
\tikzstyle{uncontrollabletransition}=[transition,fill=white,draw=black,very thick]
\tikzstyle{globalBox} = [draw,thick,align=left]
\node[place, label={[align=left,label distance=0cm]90:$\mathit{demands}$\\$\mathit{[Modulation]}$ }] at (405,-75) (demands) {};
\node[sharedplace ] at (demands.center) { };
\node[place, label={[align=left,label distance=0cm]180:$\mathit{free}$\\$\mathit{[Slots]}$ \\$\mathit{1'Slots.all}$ }] at (320,-285) (free) {};
\node at (320.0,-285.0){$\mathrm{\#21}$};
\node[sharedplace ] at (free.center) { };
\node[place, label={[align=left,label distance=0cm]180:$\mathit{occupied}$\\$\mathit{[Occupied]}$ }] at (490,-285) (occupied) {};
\node[sharedplace ] at (occupied.center) { };
\node[place, label={[align=left,label distance=0cm,xshift=1mm
]270:$\mathit{blocked}$ }] at (555,-180) (blocked) {};
\node[sharedplace ] at (blocked.center) { };
\node[transition,rotate=-180, label={[align=left,label distance=0cm]270:$\mathit{arrive}$\\$\mathit{exponential(0.1)}$ }] at (255,-75) (arrive) {};
\node[transition, label={[align=left,label distance=0cm]90:$\mathit{block}$\\$\mathit{constant(0)}$ \\$\mathit{W=0}$ }] at (555,-75) (block) {};
\node[urgenttransition] at (555,-75) (blockurgent) {};
\node[transition, label={[align=left,label distance=0cm,xshift=3mm]270:$\mathit{release}$\\$\mathit{normal(3,1)}$ }] at (405,-390) (release) {};
\node[transition, label={[align=left,label distance=0cm,yshift=3mm]180:$\mathit{allocate}$\\$\mathit{constant(0)}$ \\$\mathit{s}$ $\mathit{\leq}$ $\mathit{18}$ }] at (405,-180) (allocate) {};
\node[urgenttransition] at (405,-180) (allocateurgent) {};
\draw[arc,pos=0.5] (demands) to node[bend right=0,below,align=left] {$\mathit{}$ $\mathit{1'mod}$ } (block);
\draw[arc,pos=0.5] (occupied) to node[bend right=0,auto,align=left,xshift=-2mm] {$\mathit{}$ $\mathit{1'(s,}$ $\mathit{3)}$ } node[bend right=0,above,align=left,xshift=-2mm] {$[1,\infty)$ }  (release);
\draw[arc,pos=0.5] (demands) to node[bend right=0,auto,align=left] {$\mathit{}$ $\mathit{1'\textit{16QAM}}$ } (allocate);
\draw[arc,pos=0.5] (free) to node[bend right=0,auto,align=left,xshift=2mm] {$\mathit{}$ $\mathit{1's}$ $\mathit{+}$ $\mathit{1'(s\!+\!1)}$ $\mathit{+}$ $\mathit{1'(s\!+\!2)}$ } (allocate);
\draw[arc,pos=0.5] (arrive) to node[bend right=0,below,align=left] {\\$\mathit{1'mod}$ } (demands);
\draw[arc,pos=0.5] (block) to node[bend right=0,auto,align=left] { } (blocked);
\draw[arc,pos=0.5] (release) to node[bend right=0,auto,align=left,xshift=2mm] {\\$\mathit{1's}$ $\mathit{+}$ $\mathit{1'(s\!+\!1)}$ $\mathit{+}$ $\mathit{1'(s\!+\!2)}$ } (free);
\draw[arc,pos=0.5] (allocate) to node[bend right=0,auto,align=left,xshift=-2mm] {\\$\mathit{1'(s,}$ $\mathit{3)}$ } (occupied);
\end{tikzpicture}} \hspace{-2mm}
    \scalebox{0.72}{
\begin{tikzpicture}[font=\scriptsize, xscale=0.45, yscale=0.45, x=1.33pt, y=1.33pt]
\tikzstyle{arc}=[->,>=stealth,thick]
\tikzstyle{transportArc}=[->,>=diamond,thick]
\tikzstyle{inhibArc}=[->,>=o,thick]
\tikzstyle{every place}=[minimum size=6mm,thick]
\tikzstyle{every transition} = [fill=black,minimum width=2mm,minimum height=5mm]
\tikzstyle{every token}=[fill=white,text=black]
\tikzstyle{sharedplace}=[place,minimum size=7.5mm,dashed,thin]
\tikzstyle{sharedtransition}=[transition, fill opacity=0, minimum width=3.5mm, minimum height=6.5mm,dashed]
\tikzstyle{urgenttransition}=[place,fill=white,minimum size=2.0mm,thin]
\tikzstyle{uncontrollabletransition}=[transition,fill=white,draw=black,very thick]
\tikzstyle{globalBox} = [draw,thick,align=left]
\node[place, label={[align=left,label distance=0cm]90:$\mathit{demands}$\\$\mathit{[Modulation]}$ }] at (375,-45) (demands) {};
\node[sharedplace ] at (demands.center) { };
\node[place, label={[align=left,label distance=0cm]180:$\mathit{free}$\\$\mathit{[Slots]}$ \\$\mathit{1'Slots.all}$ }] at (245,-270) (free) {};
\node at (245.0,-270.0){$\mathrm{\#21}$};
\node[sharedplace ] at (free.center) { };
\node[place, label={[align=left,label distance=0cm]180:$\mathit{occupied}$\\$\mathit{[Occupied]}$ }] at (415,-270) (occupied) {};
\node[sharedplace ] at (occupied.center) { };
\node[place, label={[align=left,label distance=0cm]90:$\mathit{blocked}$ }] at (615,-45) (blocked) {};
\node[sharedplace ] at (blocked.center) { };
\node[place, label={[align=left,label distance=0cm,yshift=-2mm]100:$\mathit{counter}$\\$\mathit{[Slots]}$ }] at (495,-165) (counter) {};
\node[transition,rotate=-180, label={[align=left,label distance=0cm]270:$\mathit{arrive}$\\$\mathit{exponential(0.1)}$ }] at (270,-45) (arrive) {};
\node[transition, label={[align=left,label distance=0cm,xshift=3mm]270:$\mathit{release}$\\$\mathit{normal(3,1)}$ }] at (330,-375) (release) {};
\node[transition, label={[align=left,label distance=0cm]90:$\mathit{allocate}$\\$\mathit{constant(0.0)}$ \\$\mathit{W=\infty}$ \\$\mathit{s}$ $\mathit{\leq}$ $\mathit{18}$ }] at (330,-165) (allocate) {};
\node[urgenttransition] at (330,-165) (allocateurgetn) {};
\node[transition, label={[align=left,label distance=0cm,xshift=3mm]90:$\mathit{init}$\\$\mathit{constant(0)}$ }] at (495,-45) (init) {};
\node[urgenttransition] at (495,-45) (initurgent) {};
\node[transition, label={[align=left,label distance=0cm,xshift=2mm
]270:$\mathit{increment}$\\$\mathit{constant(0)}$ \\$\mathit{W=0}$ }] at (495,-270) (increment) {};
\node[urgenttransition] at (495,-270) (incrementurgent) {};
\node[transition, label={[align=left,label distance=0cm]270:$\mathit{block}$\\$\mathit{constant(0.0)}$ \\$\mathit{s}$ $\mathit{>=}$ $\mathit{18}$ }] at (615,-165) (block) {};
\node[urgenttransition] at (615,-165) (blockurgent) {};
\draw[arc,pos=0.5] (occupied) to node[bend right=0,auto,align=left,xshift=-2mm] {$\mathit{}$ $\mathit{1'(s,}$ $\mathit{3)}$ } node[bend right=0,above,align=left,xshift=-2mm] {$[1,\infty)$ } (release);
\draw[arc,pos=0.3] (free) to node[bend right=0,auto,align=left,xshift=2mm] {$\mathit{}$ $\mathit{1's}$ $\mathit{+}$ $\mathit{1'(s\!+\!1)}$ $\mathit{+}$ $\mathit{1'(s\!+\!2)}$ } (allocate);
\draw[arc,pos=0.5] (demands) to node[bend right=0,below,align=left] {$\mathit{}$ $\mathit{1'\textit{16QAM}}$ } (init);
\draw[arc,pos=0.5] (counter) to node[bend right=0,above,align=left] {$\mathit{}$ $\mathit{1's}$ } (allocate);
\draw[arc,pos=0.5, bend right=20] (counter) to  node[bend right=0,left,align=left,xshift=1mm] {$\mathit{}$ $\mathit{1's}$ } (increment);
\draw[arc,pos=0.5] (counter) to node[bend right=0,auto,align=left] {$\mathit{}$ $\mathit{1's}$ } (block);
\draw[arc,pos=0.5] (arrive) to node[bend right=0,below,align=left] {\\$\mathit{1'mod}$ } (demands);
\draw[arc,pos=0.4] (release) to node[bend right=0,auto,align=left,xshift=2mm] {\\$\mathit{1's}$ $\mathit{+}$ $\mathit{1'(s\!+\!1)}$ $\mathit{+}$ $\mathit{1'(s\!+\!2)}$ } (free);
\draw[arc,pos=0.5] (init) to node[bend right=0,auto,align=left] {\\$\mathit{1'0}$ } (counter);
\draw[arc,pos=0.5] (allocate) to node[bend right=0,auto,align=left,xshift=-2mm] {\\$\mathit{1'(s,}$ $\mathit{3)}$ } (occupied);
\draw[arc,pos=0.5, bend right=20,xshift=-1mm] (increment) to 
node[bend right=0,right,align=left] {\\$\mathit{1'(s\!+\!1)}$ } (counter);
\draw[arc,pos=0.5] (block) to node[bend right=0,auto,align=left] { } (blocked);
\end{tikzpicture}}
    \caption{Random fit (left) and first fit (right) spectrum allocation for 16QAM}
    \label{fig:spectrum}
\end{figure}

In our stochastic timed-arc Petri net model depicted in Figure~\ref{fig:spectrum}, we study
two spectrum allocation strategies: a \emph{random fit} that randomly chooses from the available slots,
trying to uniformly distribute the demands across the whole spectrum, and a strategy called \emph{first fit}~\cite{fist-random-fit} that always
uses the lowest possible available frequency slot.
We consider three common modulation schemes 16QAM, 8QAM and QPSK that require 3, 4 and 6 frequency slots~\cite{hideki-modulation}, respectively. In Figure~\ref{fig:spectrum} we present the Petri net model for 16QAM only but the
other two modulations are modelled anologously, just requiring a higher number of slots.
We use again the colored extension in order to allow for a more compact model description.
In our model, we use the variable $\mathit{mod}$ that ranges over the color type \textit{Modulation},
the variable $s$ ranging over the 21 frequency slots declared in the color type \textit{Slots}
and we also use the color type \textit{Occupied} which is a Cartesian product of the
\textit{Slots} color type and the available spectrum size with values 3 to 6.
The dashed circles around places denote the so-called shared places that appear also in the nets for the other modulations (not shown in our Petri net model). The shared place called \emph{free} contains
tokens with colors (frequency slots) that are currently available for allocation of a new demand.
Initially, there are 21 tokens in the place \emph{free} with color values from 0 to 20.
On the other hand, the shared place \emph{occupied} contains tokens with colors being pairs of integers
$(\mathit{frequency},\mathit{size})$ such that e.g. the value $10,4$ means that
the frequency slots 10, 11, 12 and 13 are occupied by a demand that requires $4$ consecutive
frequency slots.

The random fit slot allocation (Figure~\ref{fig:spectrum} left) gathers the
arrived demands (assuming an exponential distribution of arrivals with rate $0.1$) in the place
called \emph{demands}. Immediately after a demand arrival, the urgent transition \emph{block} becomes enabled
but it has weight $0$, meaning that if the transition \emph{allocate} is also enabled, it will have a priority.
The transition \emph{allocate} binds $s$ to some initial frequency slot (no more than 18 as the demand requires
three slots) and checks whether the frequency slots $s$, $s+1$ and $s+2$ are available (i.e. there are three tokens with these colors in the the place \emph{free}). If this is the case, these three tokens are removed
from the place \emph{free} and a new token with the color $(s,3)$ is added to \emph{occupied}. After one second holding time, enforced by the interval $[1,\infty)$, 
the transition \emph{release} becomes enabled and samples its firing date from the normal distribution with mean $3$ and standard deviation $1$ and releases these three slots once it fires. In case the transition \emph{allocate} is enabled
for several bindings of the variable $s$ to different frequency slots, the used binding is
selected using a uniform distribution.

In Figure~\ref{fig:spectrum} (right)  we describe the first fit spectrum allocation algorithm
that always uses the lowest available frequency slots. This is achieved by placing a token
with the color $0$ (the lowest frequency slot) to the place \emph{counter} immediately after any demand arrival.
By firing the urgent transition
\emph{increment}, we then repeatedly increase the color value of the token by 1 until we find an available slot
(using the same modelling approach as in the random assignment) or until we reach the frequency
slot $18$ where the transition \emph{block} becomes enabled. The weight of the transition \emph{allocate}
is infinity, meaning that it has a priority over the transition \emph{block} but in case the transition
is still not enabled, the transition \emph{block} will fire as the transition increment has weight 0.

We can now ask our tool to compare the blocking probability of the two allocation algorithms
by issuing the SMC query {\tt F blocked >= 1}, checking wheater at least one demand is blocked
within the first 30 seconds. 
With 95\% confidence and precision $0.01$,
our tool executes 18488 runs and estimates the blocking probability to $0.075$ for the
first fit slot allocation strategy and $0.167$ for the random fit strategy, showing an improved performance of the former strategy.
This behaviour is also conformed by experiments in computer networking literature~\cite{fist-random-fit}.

\subsection{Manufacturing Process of Wooden Windows}

\begin{figure}[t]
    \centering
    \scalebox{0.49}{
    \input{wood_cutting}
    }
    \caption{Manufacturing process of wooden windows~\cite{woden:windows}}
    \label{fig:wood}
\end{figure}

The workflow of a Hungarian wooden window manufacturing company Holz-Team Ltd details a systematic production process, from preparing individual frame components to assembling and finishing complete windows. Orders vary in size, and the process includes measures to address material defects, with early-stage issues being easier to resolve than those found later. The Petri net model is given in Figure~\ref{fig:wood} and contains timing
probabilities derived from real measurements performed by the company as described in~\cite{woden:windows}.
The workflow uses two types of transitions: exponentially delayed transitions for unpredictable events like order arrivals and material defects, and deterministic transitions for fixed-time production phases such as cutting, molding, and coating. For a detailed description of the process consult~\cite{woden:windows}.

The production process faced a bottleneck where coated windows that required a hardware fix accumulated over time in the place called \emph{Coated}. Indeed, using TAPAAL SMC engine we estimated
that with probability $0.61 \pm 0.01$ and $95$\% confidence, the weekly production process generates
more than 20 unprocessed coated windows. A solution to the bottleneck was suggested in~\cite{woden:windows} and required the hiring of an additional worker to reduce
the hardware fixing operation from 20 minutes to 13 minutes. After this adjustment, the probability
of storing 4 or more unprocessed windows dropped to 0.011. In Figure~\ref{fig:windows} (left) we depict, based on 50 simulations of the workflow for a period of one year (2008 hours), the number of accepted
orders and the average number of unprocessed coated windows for the
situation before the workflow adjustment. After the adjustment, the number of unprocessed coated
windows stays on average slightly above 1 with a maximum between 3 to 5 windows, as shown
in Figure~\ref{fig:windows} (right). This allowed the company to produce on average about 260 additional windows per year, increasing the productivity by almost 10\%.

\begin{figure}[t]
    \centering
    \scalebox{0.27}{\includegraphics[]{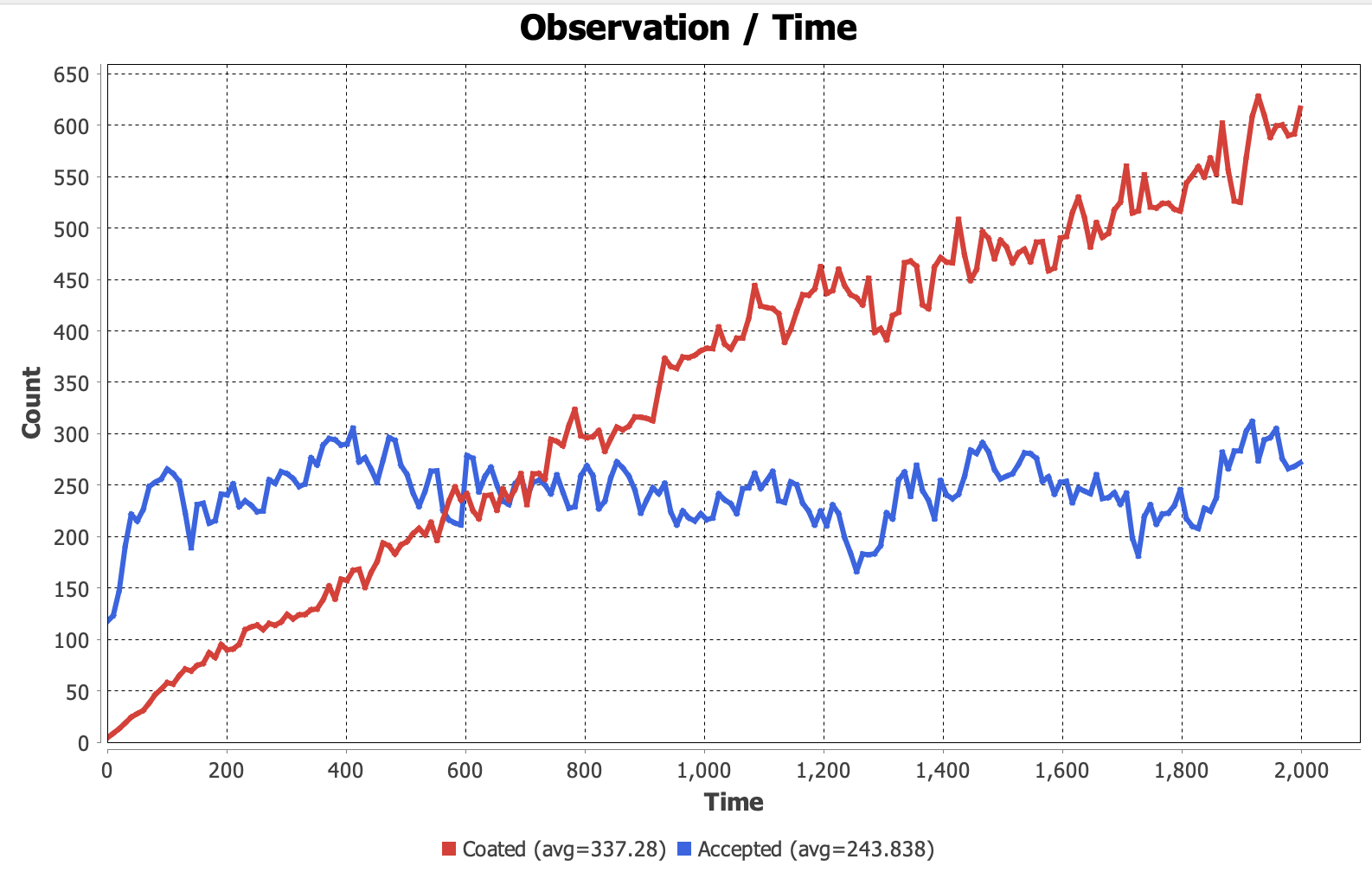}}~~~
    \scalebox{0.27}{\includegraphics[]{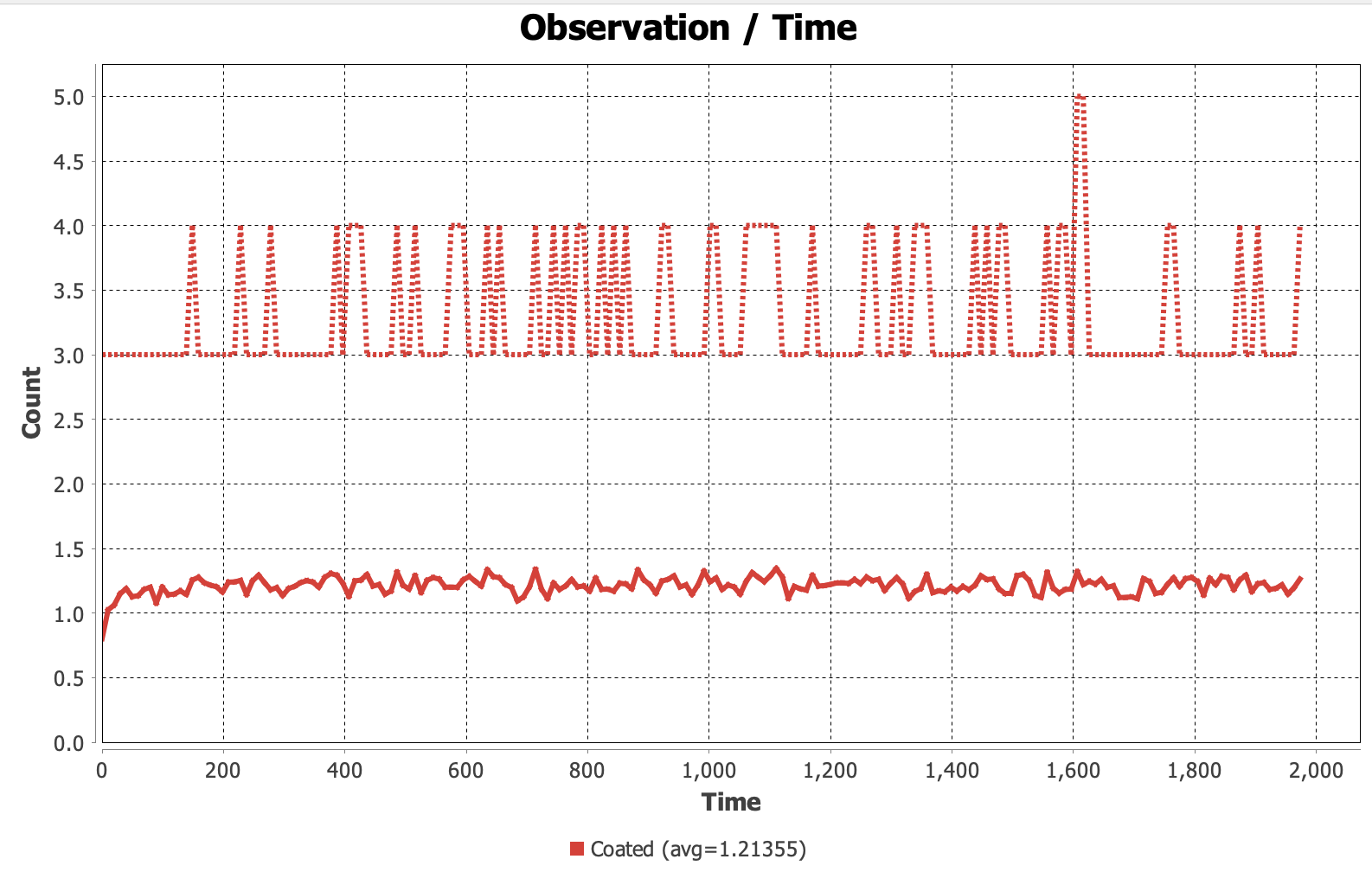}}
    \caption{Average number of orders and unprocessed coated windows over a year}
    \label{fig:windows}
\end{figure}

\subsection{Quantum Communication---Beyond Super Dense Coding}

Super dense coding~\cite{Bennett1992} is a quantum communication procedure that allows a sender to transmit to a receiver two classical bits of information by sending only a single qubit, provided that the sender and receiver pre-share a maximally entangled pair of qubits (called an EPR/Bell pair~\cite{NielsenChuang2010} such as 
$|\phi^+\rangle = \frac{1}{\sqrt{2}}(|00\rangle + |11\rangle)$).

A recently proposed quantum protocol \emph{Beyond Super Dense Coding} (BSDC)~\cite{jensen2025quantum}
builds up on this idea and operates in a synchronized slotted-time manner over a~quantum channel and allows the protocol participants to communicate up to $1.5$ classical bits per
time slot while transmitting at most one qubit in each slot.
In BSDC, time is divided into discrete slots of a given frequency where the sender either stays silent (encoding the communication of a classical bit $0$)
or sends a qubit (encoding the classical bit $1$).
A communication round over a quantum channel proceeds as follows.
\begin{itemize}
\item     
The sender scans its incoming buffer of randomly generated information bits.
If the scanned bit is $0$, it remains silent in the given time slot.
Receiver interprets no activity on the quatum channel as receiving the bit $0$.
\item 
Upon the first arrival of the classical bit $1$, the sender on-the-fly generates an 
EPR pair of qubits, stores one qubit locally in quantum memory, and transmits the other maximally entangled qubit to the receiver. The receiver detects the arriving qubit (registering the arrival of the classical bit~$1$) and stores it locally without disturbing the entanglement.
\item
The sender then continues with being silent for every scanned bit $0$ in
its input buffer. The receiver registers this as receiving $0$ in each time slot
with no communication.
\item Once the second bit $1$ appears in the sender's buffer, the sender
applies Pauli operations~\cite{NielsenChuang2010} (X, Z, or identity) to its stored qubit, chosen according to the next two bits in its buffer. If the next bit is $1$, it applies the gate X and if the next-next bit is $1$, it applies also the gate Z; otherwise
it applies the identity operation. The sender then transmits this manipulated qubit to the receiver that interprets its arrival as the receipt of the classical bit $1$ and now, holding both qubits of the EPR pair, the receiver performs the CNOT controlled gate
on the both qubits, followed by the application of the gate H on the second received qubit. The two qubits that are now measured and the two resulting classical
bits are the two bits encoded by the sender. 
In this final slot, three information bits are thus transmitted in total
and the protocol can restart.
\end{itemize}

\begin{figure}
    \centering
    \begin{subfigure}[b]{0.33\textwidth}
        \centering
        \scalebox{0.8}{
\begin{tikzpicture}[font=\scriptsize, xscale=0.3, yscale=0.45, x=1.33pt, y=1.33pt]
\tikzstyle{arc}=[->,>=stealth,thick]
\tikzstyle{transportArc}=[->,>=diamond,thick]
\tikzstyle{inhibArc}=[->,>=o,thick]
\tikzstyle{every place}=[minimum size=6mm,thick]
\tikzstyle{every transition} = [fill=black,minimum width=2mm,minimum height=5mm]
\tikzstyle{every token}=[fill=white,text=black]
\tikzstyle{sharedplace}=[place,minimum size=7.5mm,dashed,thin]
\tikzstyle{sharedtransition}=[transition, fill opacity=0, minimum width=3.5mm, minimum height=6.5mm,dashed]
\tikzstyle{urgenttransition}=[place,fill=white,minimum size=2.0mm,thin]
\tikzstyle{uncontrollabletransition}=[transition,fill=white,draw=black,very thick]
\tikzstyle{globalBox} = [draw,thick,align=left]
\node[place, label={[align=left,xshift=2mm, label distance=0cm]90:$\mathit{generated\_bit}$\\$\mathit{[bit]}$ }] at (475,-135) (generated_bit) {};
\node[sharedplace ] at (generated_bit.center) { };
\node[transition,rotate=-180, label={[align=left,label distance=0cm]90:$\mathit{generate\_one}$\\$\mathit{constant(0)}$ }] at (630,-135) (generate_one) {};
\node[transition,rotate=-180, label={[align=left,label distance=0cm]90:$\mathit{generate\_zero}$\\$\mathit{constant(0)}$ }] at (315,-135) (generate_zero) {};
\draw[arc,pos=0.5] (generate_one) to node[bend right=0,auto,align=left,yshift=1mm] {\\$\mathit{1'one}$ } (generated_bit);
\draw[inhibArc,pos=0.5,bend left=30] (generated_bit)  to node[bend right=20,auto,align=left] {} (generate_one);
\draw[arc,pos=0.5] (generate_zero) to node[bend right=0,below,yshift=1mm, align=right] {\\$\mathit{1'zero}$ } (generated_bit);
\draw[inhibArc,pos=0.5, bend right=30] (generated_bit)  to node[bend right=0,auto,align=left] {} (generate_zero);
\node [globalBox] (globalBox) at (current bounding box.north) [anchor=south] {Color Types:\\$\mathit{dot}$ \textbf{is} $\mathit{[dot]}$\\ $\mathit{bit}$ \textbf{is} $\mathit{[zero, one]}$\\$\mathit{bell\_pair}$ \textbf{is} $\mathit{<\!bit, bit\!>}$\\$\mathit{buffer}$ \textbf{is} $\mathit{<\!bit, bit, bit\!>}$\\\\Variables:\\ $\mathit{x, y, z, u, v, w \textbf{ in } bit}$\\\\Constants:\\$\mathit{f = 8000}$};
\end{tikzpicture}
        }
        \caption{Classical bit generator}
        \label{fig:bit_generator}
     \end{subfigure}
     \begin{subfigure}[b]{0.59\textwidth}
        \centering
        \scalebox{0.8}{
\begin{tikzpicture}[font=\scriptsize, xscale=0.45, yscale=0.45, x=1.33pt, y=1.33pt]
\tikzstyle{arc}=[->,>=stealth,thick]
\tikzstyle{transportArc}=[->,>=diamond,thick]
\tikzstyle{inhibArc}=[->,>=o,thick]
\tikzstyle{every place}=[minimum size=6mm,thick]
\tikzstyle{every transition} = [fill=black,minimum width=2mm,minimum height=5mm]
\tikzstyle{every token}=[fill=white,text=black]
\tikzstyle{sharedplace}=[place,minimum size=7.5mm,dashed,thin]
\tikzstyle{sharedtransition}=[transition, fill opacity=0, minimum width=3.5mm, minimum height=6.5mm,dashed]
\tikzstyle{urgenttransition}=[place,fill=white,minimum size=2.0mm,thin]
\tikzstyle{uncontrollabletransition}=[transition,fill=white,draw=black,very thick]
\tikzstyle{globalBox} = [draw,thick,align=left]
\node[place, label={[align=left,label distance=0cm]180:$\mathit{generated\_bit}$\\$\mathit{[bit]}$ }] at (270,-75) (generated_bit) {};
\node[sharedplace ] at (generated_bit.center) { };
\node[place, label={[align=left,label distance=0cm]270:$\mathit{start}$\\$\mathit{[dot]}$ \\$\mathit{1'dot}$ }] at (270,-315) (start) {};
\node at (270.0,-315.0){0.0};
\node[sharedplace ] at (start.center) { };
\node[place, label={[align=left,label distance=0cm,xshift=2mm,yshift=-1mm]270:$\mathit{sender\_buffer}$\\$\mathit{[buffer]}$ \\$\mathit{1'(zero,}$ $\mathit{zero,}$ $\mathit{zero)}$ }] at (105,-165) (sender_buffer) {};
\node at (105.0,-165.0){0.0};
\node[sharedplace ] at (sender_buffer.center) { };
\node[place, label={[align=left,label distance=0cm]180:$\mathit{wait}$\\$\mathit{[dot]}$ }] at (105,-315) (wait) {};
\node[sharedplace ] at (wait.center) { };
\node[place, label={[align=left,label distance=0cm]270:$\mathit{communication\_channel}$\\$\mathit{[bell\_pair]}$ }] at (450,-340) (communication_channel) {};
\node[sharedplace ] at (communication_channel.center) { };
\node[place, label={[align=left,label distance=0cm]270:$\mathit{first\_qubit\_sent}$\\$\mathit{[dot]}$ }] at (520,-160) (first_qubit_sent) {};
\node[sharedplace ] at (first_qubit_sent.center) { };
\node[transition, label={[align=left,label distance=0cm]0:$\mathit{got\_zero}$\\$\mathit{constant(1)}$ }] at (270,-165) (got_zero) {};
\node[transition, label={[align=left,xshift=2mm,label distance=0cm]90:$\mathit{ready}$\\$\mathit{constant(0)}$ }] at (180,-315) (ready) {};
\node[transition, label={[align=left,label distance=0cm]20:$\mathit{got\_one}$\\$\mathit{normal(650,50)}$ }] at (450,-75) (got_one) {};
\draw[arc,pos=0.5] (generated_bit) to node[bend right=0,auto,align=left] {$\mathit{}$ $\mathit{1'zero}$ } (got_zero);
\draw[arc,pos=0.5, bend left=20] (sender_buffer)  to node[bend right=0,auto,align=left] {$\mathit{}$ $\mathit{1'(x,}$ $\mathit{y,}$ $\mathit{z)}$ } (got_zero);
\draw[arc,pos=0.5, bend left=20] (got_zero)  to node[bend right=0,auto,above, align=left,yshift=0mm] {\\$\mathit{1'(y,}$ $\mathit{z,}$ $\mathit{zero)}$ } (sender_buffer);
\draw[arc,pos=0.5,yshift=3mm] (wait) to node[bend right=0,below,align=left,yshift=1mm] {$\mathit{}$  \\$\mathit{[f,f]}$ } (ready);
\draw[arc,pos=0.5] (ready) to node[bend right=0,auto,align=left] {} (start);
\draw[transportArc,pos=0.5] (start) to node[bend right=0,auto,align=left] {$\mathit{}$  } (got_zero);
\draw[transportArc,pos=0.5] (got_zero)  .. controls(254.0,-189.0) and (266.0,-255.0) .. (237,-252) to [bend right=0] (105,-252) to node[bend right=0,auto,align=left] {$\mathit{}$  } (wait);
\draw[transportArc,pos=0.5] (start) to node[bend right=0,auto,align=left] {$\mathit{}$ $\mathit{}$} (got_one);
\draw[transportArc,pos=0.5] (got_one) to [bend right=0] (300,-380) to [bend right=0] (105,-380) to node[bend right=0,auto,align=left] {$\mathit{}$} (wait);
\draw[arc,pos=0.8] (got_one) 
to node[bend right=0,auto,align=left] {\\$\mathit{1'(zero,}$ $\mathit{zero)}$ } (communication_channel);
\draw[arc,pos=0.2] (sender_buffer) to [bend right=0] (105,-12) to [bend right=0] (452,-12) to node[bend right=0,left,align=left,xshift=0mm] {$\mathit{}$ $\mathit{1'(x,}$ $\mathit{y,}$ $\mathit{z)}$ } (got_one);
\draw[arc,pos=-0.23] (got_one) to [bend right=0] (432,-42) to [bend right=0] (162,-42) to node[bend right=0,auto,align=left,xshift=23mm] {\\$\mathit{1'(y,}$ $\mathit{z,}$ $\mathit{one)}$ } (sender_buffer);
\draw[arc,pos=0.5] (got_one) to node[bend right=0,auto,align=left] {$\mathit{}$ } (first_qubit_sent);
\draw[inhibArc,pos=0.5, bend right=35] (first_qubit_sent)  to node[bend right=0,auto,align=left] {} (got_one);
\draw[arc,pos=0.5] (generated_bit) to node[bend right=0,auto,align=left] {$\mathit{}$ $\mathit{1'one}$ } (got_one);

\end{tikzpicture}
        }
        \caption{Send single bit}
        \label{fig:send_single_bit}
     \end{subfigure}
     \hfill
 \begin{subfigure}[b]{0.5\textwidth}
        \centering
        \scalebox{0.8}{
\begin{tikzpicture}[font=\scriptsize, xscale=0.45, yscale=0.45, x=1.33pt, y=1.33pt]
\tikzstyle{arc}=[->,>=stealth,thick]
\tikzstyle{transportArc}=[->,>=diamond,thick]
\tikzstyle{inhibArc}=[->,>=o,thick]
\tikzstyle{every place}=[minimum size=6mm,thick]
\tikzstyle{every transition} = [fill=black,minimum width=2mm,minimum height=5mm]
\tikzstyle{every token}=[fill=white,text=black]
\tikzstyle{sharedplace}=[place,minimum size=7.5mm,dashed,thin]
\tikzstyle{sharedtransition}=[transition, fill opacity=0, minimum width=3.5mm, minimum height=6.5mm,dashed]
\tikzstyle{urgenttransition}=[place,fill=white,minimum size=2.0mm,thin]
\tikzstyle{uncontrollabletransition}=[transition,fill=white,draw=black,very thick]
\tikzstyle{globalBox} = [draw,thick,align=left]
\node[place, label={[align=left,label distance=0cm]90:$\mathit{generated\_bit}$\\$\mathit{[bit]}$ }] at (330,-165) (generated_bit) {};
\node[sharedplace ] at (generated_bit.center) { };
\node[place, label={[align=left,label distance=0cm]90:$\mathit{start}$\\$\mathit{[dot]}$ \\$\mathit{1'dot}$ }] at (315,-90) (start) {};
\node at (315.0,-90.0){0.0};
\node[sharedplace ] at (start.center) { };
\node[place, label={[align=left,label distance=0cm,xshift=-4mm]90:$\mathit{sender\_buffer}$\\$\mathit{[buffer]}$ \\$\mathit{1'(zero,}$ $\mathit{zero,}$ $\mathit{zero)}$ }] at (630,-360) (sender_buffer) {};
\node at (630.0,-360.0){0.0};
\node[sharedplace ] at (sender_buffer.center) { };
\node[place, label={[align=left,label distance=0cm]90:$\mathit{wait}$\\$\mathit{[dot]}$ }] at (675,-90) (wait) {};
\node[sharedplace ] at (wait.center) { };
\node[place, label={[align=left,label distance=0cm,xshift=9mm]90:$\mathit{communication\_channel}$\\$\mathit{[bell\_pair]}$ }] at (345,-530) (communication_channel) {};
\node[sharedplace ] at (communication_channel.center) { };
\node[place, label={[align=left,label distance=0cm]90:$\mathit{first\_qubit\_sent}$\\$\mathit{[dot]}$ }] at (675,-165) (first_qubit_sent) {};
\node[sharedplace ] at (first_qubit_sent.center) { };
\node[place, label={[align=left,label distance=0cm]0:$\mathit{next\_bit}$\\$\mathit{[dot]}$ }] at (495,-255) (next_bit) {};
\node[place, label={[align=left,label distance=0cm]120:
$\mathit{[bit]}$ }] at (405,-360) (next_next_bit) {};
\node[place, label={[align=left,label distance=0cm]0:$\mathit{prepared\_bell\_pair}$\\$\mathit{[bell\_pair]}$ }] at (495,-450) (prepared_bell_pair) {};
\node[sharedplace ] at (prepared_bell_pair.center) { };
\node[transition, label={[align=left,label distance=0cm,yshift=0mm]90:$\mathit{ready}$\\$\mathit{constant(0)}$ }] at (495,-90) (ready) {};
\node[transition, label={[align=left,label distance=0cm,yshift=1mm]90:$\mathit{got\_one}$\\$\mathit{constant(1)}$ }] at (495,-165) (got_one) {};
\node[transition, label={[align=left,label distance=0cm,yshift=0mm,xshift=1mm]180:$\mathit{next}$\\$\mathit{constant(0)}$ }] at (405,-255) (next) {};
\node[transition, label={[align=left,label distance=0cm]90:$\mathit{next\_next}$\\$\mathit{constant(0)}$ }] at (495,-360) (next_next) {};
\node[transition, label={[align=left,label distance=0cm]20:$\mathit{send\_qubit}$\\$\mathit{constant(1)}$ }] at (495,-530) (send_qubit) {};
\draw[arc,pos=0.5] (wait) to node[bend right=0,auto,align=left] {$\mathit{}$ $\mathit{[f,f]}$ } (ready);
\draw[arc,pos=0.5] (ready) to node[bend right=0,auto,align=left] {} (start);
\draw[arc,pos=0.5] (generated_bit) to node[bend right=0,auto,align=left] {$\mathit{}$ $\mathit{1'one}$ } (got_one);
\draw[arc,pos=0.5] (generated_bit) to node[bend right=0,auto,align=left, xshift=-4mm, yshift=-2mm] {$\mathit{1'x}$ \\ \mbox{ } $\mathit{zero}$ $\mathit{\rightarrow}$ $\mathit{[1,1]}$ \\\mbox{  } \mbox{ } $\mathit{one}$ $\mathit{\rightarrow}$ $\mathit{[32,32]}$ } (next);
\draw[arc,pos=0.5, bend right=20] (generated_bit) .. controls (316.0,-434.0) .. (450,-404)  to node[bend right=10,auto,align=left,yshift=-4mm, xshift=-8mm] {$\mathit{}$ $\mathit{1'y}$ \\$\mathit{zero}$ $\mathit{\rightarrow}$ $\mathit{[1,1]}$ \\$\mathit{one}$ $\mathit{\rightarrow}$ $\mathit{[32,32]}$ } (next_next);
\draw[arc,pos=0.5,yshift=0mm] (send_qubit) to node[bend right=0,auto,above,align=left,yshift=0mm] {\\$\mathit{1'(x,}$ $\mathit{y)}$ } (communication_channel);
\draw[arc,pos=0.5] (first_qubit_sent) to [bend right=0] (675,-530) to node[bend right=0,auto,align=left] {$\mathit{}$  } (send_qubit);
\draw[transportArc,pos=0.5] (start) to node[bend right=0,auto,align=left] {$\mathit{}$ $\mathit{}$   } (got_one);
\draw[transportArc,pos=0.5] (got_one) to node[bend right=0,auto,align=left] {$\mathit{}$ } (wait);
\draw[arc,pos=0.5] (got_one) to node[bend right=0,auto,align=left] { } (next_bit);
\draw[arc,pos=0.5] (next_bit) to node[bend right=0,auto,align=left] { } (next);
\draw[arc,pos=0.5] (next) to node[bend right=0,auto,align=left] {\\$\mathit{1'x}$ } (next_next_bit);
\draw[arc,pos=0.5] (next_next_bit) to node[bend right=0,auto,align=left] {$\mathit{}$ $\mathit{1'x}$  } (next_next);
\draw[arc,pos=0.7] (next_next) to node[bend right=0,auto,align=left] {\\$\mathit{1'(x,}$ $\mathit{y)}$ } (prepared_bell_pair);
\draw[arc,pos=0.4,xshift=-2mm] (prepared_bell_pair) to node[bend right=0,auto,align=left] {$\mathit{}$ $\mathit{1'(x,}$ $\mathit{y)}$ } (send_qubit);
\draw[arc,pos=0.5,bend left=40] (sender_buffer)  to node[bend right=0,auto,align=left,yshift=1.5mm] {$\mathit{}$ $\mathit{1'(u,}$ $\mathit{v,}$ $\mathit{w)}$ } (next_next);
\draw[arc,pos=0.5] (next_next) to node[bend right=0,auto,below,yshift=1.5mm,align=left] {\\$\mathit{1'(one,}$ $\mathit{x,}$ $\mathit{y)}$ } (sender_buffer);
\draw[arc,pos=0.5, bend right=20] (first_qubit_sent) to node[bend right=0,auto,align=left] {$\mathit{}$  } (got_one);
\draw[arc,pos=0.5,bend right=20] (got_one)  to node[bend right=0,auto,align=left] {$\mathit{}$  } (first_qubit_sent);

\end{tikzpicture}
        }
        \caption{Send three bits}
        \label{fig:send_three_bits}
     \end{subfigure}   
    \caption{BSDC bit generation and sender behaviour}
    \label{fig:sender}
\end{figure}

The protocol  relies both on the strict timing constraints
(length of the frequency slot vs. the duration of the application of quantum gates,
creation of the EPR pair, and the duration of measurements)
as well as on probabilistic aspects due to
decoherence (loss of entanglement between the two qubits). This 
means that with the increasing time delay between creating of the EPR pair
until the measurement of the qubits is performed, the receiver has 
a decreasing probability that it will decode correctly the two encoded classical bits.

In order to use realistic timing and decoherence information, we
overtook this information from an existing IBM quantum computer~\cite{IBMData},
where the application of 1-qubit gates takes $32$ ns (nano seconds),
2-qubit gates require 138 ns, a qubit measurement takes $2400$ ns and the
reset of the qubits before every round has estimated duration of $480$ ns.

Similarly, to obtain hardware-realistic decoherence times for the generated
EPR pair which can be in one of the possible four Bell states~\cite{NielsenChuang2010},
we model the mixed states of the EPR pair as a density matrix~\cite{Watrous2018} that represents
the probabilistic distribution of the four Bell states where the 
likehood of obtaining a wrong measurement increases over time
with the coefficients T1=$343.25$ $\mu$s for relaxation time and
T2=$263.65$ $\mu$s for dephasing, also taken from~\cite{IBMData}. This is also
known as Pauli-twirl~\cite{Geller2013}.
We sampled a series of 1000 decoherence times from the obtained density matrix (computed
separately in Python) and created a corresponding custom distribution that is imported
to TAPAAL. Should the live-time
of the created EPR pair exceed the sampled value from this custom distribution,
we assume that instead of reading the correct Bell pair, during the measurement we
obtain any of the four possible Bell states with uniform probability.

The sender part of our TAPAAL model of the BSDC protocol is depicted
in Figure~\ref{fig:sender}. The model uses the colored extension of the model
and relies on the notion of shared places, already introduced in the previous case
studies. The color type \emph{bit} represents two possible values of a classical
bit information and Bell pair is represented as a product of two classical bits, 
encoding the four possible results of the measurement.
We also use a buffer color type of size three 
where we remember the last three sent/received classical bits and use
these buffers to check the error rates in case that the~sent/received
bits disagree due to timing conflicts or decoherence. The model
also uses the constant $f$ that is the duration of the time slot
of the protocol (in nano seconds) and we vary this constant in order to
explore the behaviour of the protocol under different sending frequencies.
Figure~\ref{fig:bit_generator} show a simple way how we model
the incoming bits that should be transmitted. Whenever the shared place
\emph{generated\_bit} is empty, a bit \emph{zero} or \emph{one}
is immediately inserted to the place, assuming that the bits arrive
with uniform probability (both transitions in the component have the default
weight $1$). The two transitions are blocked until the colored token
from the place \emph{generated\_bit} is consumed by the sender.

This can happen in the component in Figure~\ref{fig:send_single_bit}
where we model the sending of the bits $0$ and of the first bit $1$.
Initially, the component has a token  of age $0.0$ in the place \emph{start}. In case
that the arriving bit in the place \emph{generated\_bit} has
the value \emph{zero}, we shall fire the transition \emph{got\_zero}
that has the duration of $1$ ns, consumes the token from \emph{generated\_bit}
and places the control token to the place \emph{wait}. As we use the transport
arcs here, the age of the control token is preserved and we wait until
the end of the frequency slot, after which the token is placed
again into \emph{start} by firing the transition \emph{ready}. At the same
time, we read the three bits from the place \emph{sender\_buffer}, shift
the bits by one position and remember on the last position in the buffer that
we just dispatched the bit $0$. In this case, no qubit is passed
to the place \emph{communication\_channel}. 
If on the other hand \emph{generated\_bit} contains a token with 
color \emph{one}, we fire the transition
\emph{got\_one} which remembers in \emph{sender\_buffer} 
that we are dispatching the bit $1$, it prepares the EPR pair
and sends the first qubit to the quantum communication channel (the values
of the qubit are irrelevant in this phase). The transition also
places a token to the place
\emph{first\_qubit\_sent} which makes sure that once the second bit $1$
arrives, we have to perfrom the encoding process described in Figure~\ref{fig:send_three_bits} instead. The duration of the transition
\emph{got\_one} is $650$ ns, drawn from normal distribution with variance of $50$.
Here we have to perfrom the reset of two qubits (480 ns), followed by 
the application of the Hadamard gate H (32 ns) on the first qubit followed
by the application of CNOT on both qubits (138 ns) in order to prepare the EPR pair.

Upon the arrival of the second bit $1$, we instead perform the encoding
procedure described in Figure~\ref{fig:send_three_bits}. After the second generated
bit $1$ is consumed by the transition \emph{got\_one} (which is possible
only if the place \emph{first\_quibit\_sent} has a token), we place a token
to the place \emph{wait} that as before serves as a timer in order to keep
the protocol in sync at frequency $f$. At the same time, we  fire a sequence
of two transitions \emph{next} and \emph{next\_next} that consume
the next two input bits from the place \emph{generated\_bit} and encode
them into a qubit that is moved to the place \emph{prepared\_bell\_pair}.
The application of the X and Z gates in case that the input bits are $1$
causes a delay of $32$ ns whereas the reading of the bit $0$ has delay of $1$ ns
(application of the identity gate). This is depicted by the color specific
intervals on the input arcs. Upon consuming the three bits from 
\emph{generated\_bit}, the values are remembered in \emph{sender\_buffer}
and the generated qubit is forward to \emph{communication\_channel} while
as the same time consuming the token from \emph{first\_qubit\_sent}.
We shall discuss how we model decoherence of the EPR pair later on.

\begin{figure}[t]
    \centering
    \begin{subfigure}[b]{0.42\textwidth}
        \centering
        \scalebox{0.78}{
\begin{tikzpicture}[font=\scriptsize, xscale=0.45, yscale=0.45, x=1.33pt, y=1.33pt]
\tikzstyle{arc}=[->,>=stealth,thick]
\tikzstyle{transportArc}=[->,>=diamond,thick]
\tikzstyle{inhibArc}=[->,>=o,thick]
\tikzstyle{every place}=[minimum size=6mm,thick]
\tikzstyle{every transition} = [fill=black,minimum width=2mm,minimum height=5mm]
\tikzstyle{every token}=[fill=white,text=black]
\tikzstyle{sharedplace}=[place,minimum size=7.5mm,dashed,thin]
\tikzstyle{sharedtransition}=[transition, fill opacity=0, minimum width=3.5mm, minimum height=6.5mm,dashed]
\tikzstyle{urgenttransition}=[place,fill=white,minimum size=2.0mm,thin]
\tikzstyle{uncontrollabletransition}=[transition,fill=white,draw=black,very thick]
\tikzstyle{globalBox} = [draw,thick,align=left]
\node[place, label={[align=left,label distance=0cm]90:$\mathit{communication\_channel}$\\$\mathit{[bell\_pair]}$ }] at (480,-120) (communication_channel) {};
\node[sharedplace ] at (communication_channel.center) { };
\node[place, label={[align=left,label distance=0cm]180:$\mathit{waiting}$\\$\mathit{[dot]}$ \\$\mathit{1'dot}$ }] at (240,-210) (waiting) {};
\node at (240.0,-210.0){0.0};
\node[sharedplace ] at (waiting.center) { };
\node[place, label={[align=left,label distance=0cm]330:$\mathit{receiver\_buffer}$\\$\mathit{[buffer]}$ \\$\mathit{1'(zero,}$ $\mathit{zero,}$ $\mathit{zero)}$ }] at (375,-300) (receiver_buffer) {};
\node at (375.0,-300.0){0.0};
\node[sharedplace ] at (receiver_buffer.center) { };
\node[place, label={[align=left,label distance=0cm]90:$\mathit{done}$\\$\mathit{[dot]}$}] at (375,-120) (done) {};
\node[place, label={[align=left,label distance=0cm]90:$\mathit{got\_first\_one}$\\$\mathit{[dot]}$ }] at (500,-210) (got_first_one) {};
\node[sharedplace ] at (got_first_one.center) { };
\node[transition, label={[align=left,label distance=0cm]270:$\mathit{nothing\_received}$\\$\mathit{constant(0)}$ }] at (240,-300) (nothing_received) {};
\node[transition, label={[align=left,label distance=0cm]110:$\mathit{received\_qubit}$\\$\mathit{constant(1)}$ }] at (375,-210) (received_qubit) {};
\node[transition, label={[align=left,label distance=0cm]90:$\mathit{ready}$\\$\mathit{constant(0)}$ }] at (240,-120) (ready) {};
\draw[arc,pos=0.5, bend right=20] (waiting) to node[bend right=0,auto,align=right,left,xshift=2mm] {$\mathit{[f,f]}$ } (nothing_received);
\draw[arc,pos=0.5, bend right=20] (nothing_received)   to node[bend right=0,auto,align=left] {\\$\mathit{1'(y,}$ $\mathit{z,}$ $\mathit{zero)}$ } (receiver_buffer);
\draw[arc,pos=0.5, bend right=20] (nothing_received)  to node[bend right=0,auto,align=left] { } (waiting);
\draw[arc,pos=0.5, bend right=20] (receiver_buffer)  to node[bend right=0,auto,above,align=left] {$\mathit{}$ $\mathit{1'(x,}$ $\mathit{y,}$ $\mathit{z)}$ } (nothing_received);
\draw[arc,pos=0.3] (communication_channel) to node[bend right=0,auto,align=left,above,xshift=-4mm] {$\mathit{}$ $\mathit{1'(u,}$ $\mathit{v)}$ } (received_qubit);
\draw[transportArc,pos=0.5] (waiting) to node[bend right=0,auto,align=left] {$\mathit{}$  } (received_qubit);
\draw[transportArc,pos=0.5] (received_qubit) to node[bend right=0,auto,align=left] {$\mathit{}$ $\mathit{}$  } (done);
\draw[arc,pos=0.5, bend left=0] (done) to node[bend right=0,auto,align=right] {$\mathit{}$ $\mathit{[f,f]}$ } (ready);
\draw[arc,pos=0.5] (ready) to  node[bend right=0,auto,align=left] { } (waiting);
\draw[arc,pos=0.6, bend left=20] (receiver_buffer)   to node[bend right=0,auto,align=left,xshift=1.5mm] {$\mathit{}$ $\mathit{1'(x,}$ $\mathit{y,}$ $\mathit{z)}$ } (received_qubit);
\draw[arc,pos=0.5, bend left=20] (received_qubit)   to node[bend right=0,auto,align=left, xshift=-1mm] {$\mathit{1'(y,}$ $\mathit{z,}$ $\mathit{one)}$ } (receiver_buffer);
\draw[arc,pos=0.5,bend left=20] (received_qubit)  to node[bend right=0,auto,align=left] { } (got_first_one);
\draw[inhibArc,pos=0.5, bend left=20] (got_first_one)   to node[bend right=0,auto,align=left] {} (received_qubit);

\end{tikzpicture}
        }
        \vspace{0mm}
        \caption{Receive single bit}
        \label{fig:receive_single_bit}
     \end{subfigure} \hspace{-3mm}
     \begin{subfigure}[b]{0.54\textwidth}
        \centering
        \scalebox{0.78}{
\begin{tikzpicture}[font=\scriptsize, xscale=0.45, yscale=0.45, x=1.33pt, y=1.33pt]
\tikzstyle{arc}=[->,>=stealth,thick]
\tikzstyle{transportArc}=[->,>=diamond,thick]
\tikzstyle{inhibArc}=[->,>=o,thick]
\tikzstyle{every place}=[minimum size=6mm,thick]
\tikzstyle{every transition} = [fill=black,minimum width=2mm,minimum height=5mm]
\tikzstyle{every token}=[fill=white,text=black]
\tikzstyle{sharedplace}=[place,minimum size=7.5mm,dashed,thin]
\tikzstyle{sharedtransition}=[transition, fill opacity=0, minimum width=3.5mm, minimum height=6.5mm,dashed]
\tikzstyle{urgenttransition}=[place,fill=white,minimum size=2.0mm,thin]
\tikzstyle{uncontrollabletransition}=[transition,fill=white,draw=black,very thick]
\tikzstyle{globalBox} = [draw,thick,align=left]
\node[place, label={[align=left,label distance=0cm]90:$\mathit{communication\_channel}$\\$\mathit{[bell\_pair]}$ }] at (370,-85) (communication_channel) {};
\node[sharedplace ] at (communication_channel.center) { };
\node[place, label={[align=left,label distance=0cm]270:$\mathit{waiting}$\\$\mathit{[dot]}$ \\$\mathit{1'dot}$ }] at (120,-180) (waiting) {};
\node at (120.0,-180.0){0.0};
\node[sharedplace ] at (waiting.center) { };
\node[place, label={[align=left,label distance=0cm]90:$\mathit{receiver\_buffer}$\\$\mathit{[buffer]}$ \\$\mathit{1'(zero,}$ $\mathit{zero,}$ $\mathit{zero)}$ }] at (540,-180) (receiver_buffer) {};
\node at (540.0,-180.0){0.0};
\node[sharedplace ] at (receiver_buffer.center) { };
\node[place, label={[align=left,label distance=0cm]0:$\mathit{received\_pair}$\\$\mathit{[bell\_pair]}$ }] at (390,-285) (received_pair) {};
\node[place, label={[align=left,label distance=0cm]0:$\mathit{got\_first\_one}$\\$\mathit{[dot]}$ }] at (225,-285) (got_first_one) {};
\node[sharedplace ] at (got_first_one.center) { };
\node[place, label={[align=left,label distance=0cm]90:$\mathit{next\_slot\_timer}$\\$\mathit{[dot]}$ }] at (225,-85) (next_slot_timer) {};
\node[transition, label={[align=left,label distance=0cm]170:$\mathit{receive}$\\$\mathit{constant(1)}$ }] at (225,-180) (received_qubit) {};
\node[transition, label={[align=left,label distance=0cm]90:$\mathit{decode}$\\$\mathit{normal(4970,300)}$ }] at (390,-180) (decode) {};
\node[transition, label={[align=left,label distance=0cm]90:$\mathit{ready}$\\$\mathit{constant(0)}$ }] at (120,-85) (ready) {};
\draw[arc,pos=0.4] (communication_channel) to node[bend right=0,auto,align=left,above,xshift=-3mm] {$\mathit{}$ $\mathit{1'(x,}$ $\mathit{y)}$ } (received_qubit);
\draw[arc,pos=0.5] (got_first_one) to node[bend right=0,auto,align=left] { } (received_qubit);
\draw[arc,pos=0.5, bend right=20] (receiver_buffer) to node[bend right=0,auto,align=left] {$\mathit{}$ $\mathit{1'(u,}$ $\mathit{v,}$ $\mathit{w)}$ } (decode);
\draw[arc,pos=0.5, bend right=20] (decode)  to node[bend right=0,auto,below,align=left,yshift=1.5mm] {\\$\mathit{1'(one,}$ $\mathit{x,}$ $\mathit{y)}$ } (receiver_buffer);
\draw[arc,pos=0.5] (next_slot_timer) to node[bend right=0,auto,align=left] {$\mathit{}$ $\mathit{[f,f]}$ } (ready);
\draw[arc,pos=0.5] (ready) to node[bend right=0,auto,align=left] { } (waiting);
\draw[transportArc,pos=0.5] (waiting) to node[bend right=0,auto,align=left] {  } (received_qubit);
\draw[transportArc,pos=0.5] (received_qubit) to node[bend right=0,auto,align=left] {$\mathit{}$ } (next_slot_timer);
\draw[arc,pos=0.5] (received_qubit) to node[bend right=0,auto,above,align=left,xshift=2mm] {\\$\mathit{1'(x,}$ $\mathit{y)}$ } (received_pair);
\draw[arc,pos=0.3] (received_pair) to node[bend right=0,auto,align=left,right,xshift=-1mm] {$\mathit{}$ $\mathit{1'(x,}$ $\mathit{y)}$ } (decode);
\end{tikzpicture}
        }
        \vspace{0cm}
        \caption{Receive three bits}
        \label{fig:receive_three_bits}
     \end{subfigure}
     
 \begin{subfigure}[b]{0.6\textwidth}
        \centering
        \scalebox{0.8}{
\begin{tikzpicture}[font=\scriptsize, xscale=0.45, yscale=0.45, x=1.33pt, y=1.33pt]
\tikzstyle{arc}=[->,>=stealth,thick]
\tikzstyle{transportArc}=[->,>=diamond,thick]
\tikzstyle{inhibArc}=[->,>=o,thick]
\tikzstyle{every place}=[minimum size=6mm,thick]
\tikzstyle{every transition} = [fill=black,minimum width=2mm,minimum height=5mm]
\tikzstyle{every token}=[fill=white,text=black]
\tikzstyle{sharedplace}=[place,minimum size=7.5mm,dashed,thin]
\tikzstyle{sharedtransition}=[transition, fill opacity=0, minimum width=3.5mm, minimum height=6.5mm,dashed]
\tikzstyle{urgenttransition}=[place,fill=white,minimum size=2.0mm,thin]
\tikzstyle{uncontrollabletransition}=[transition,fill=white,draw=black,very thick]
\tikzstyle{globalBox} = [draw,thick,align=left]
\node[place, label={[align=left,label distance=0cm]90:$\mathit{sender\_buffer}$\\$\mathit{[buffer]}$ \\$\mathit{1'(zero,}$ $\mathit{zero,}$ $\mathit{zero)}$ }] at (150,-105) (sender_buffer) {};
\node at (150.0,-105.0){0.0};
\node[sharedplace ] at (sender_buffer.center) { };
\node[place, label={[align=left,label distance=0cm]270:$\mathit{receiver\_buffer}$\\$\mathit{[buffer]}$ \\$\mathit{1'(zero,}$ $\mathit{zero,}$ $\mathit{zero)}$ }] at (150,-300) (receiver_buffer) {};
\node at (150.0,-300.0){0.0};
\node[sharedplace ] at (receiver_buffer.center) { };
\node[place, label={[align=left,label distance=0cm]0:$\mathit{check}$\\$\mathit{[dot]}$ }] at (420,-195) (check) {};
\node[place, label={[align=left,label distance=0cm]90:$\mathit{error}$\\$\mathit{[dot]}$ }] at (330,-285) (error) {};
\node[sharedplace ] at (error.center) { };
\node[place, label={[align=left,label distance=0cm]90:$\mathit{timer}$\\$\mathit{[dot]}$ \\$\mathit{1'dot}$ }] at (320,-105) (timer) {0.0};
\node[transition, label={[align=left,label distance=0cm]70:$\mathit{ok}$\\$\mathit{constant(0)}$ }] at (270,-195) (ok) {};
\node[transition, label={[align=left,label distance=0cm]270:$\mathit{mismatch}$\\$\mathit{constant(0)}$ \\$\mathit{W=0}$ }] at (420,-285) (mismatch) {};
\node[transition, label={[align=left,label distance=0cm]0:$\mathit{start\_check}$\\$\mathit{constant(0)}$ }] at (420,-105) (start_check) {};
\draw[arc,pos=0.5] (check) to node[bend right=0,auto,align=left] {$\mathit{}$  } (ok);
\draw[arc,pos=0.5,bend left=20] (sender_buffer) to node[bend right=0,auto,align=left,yshift=0mm,xshift=-2mm] {$\mathit{}$ $\mathit{1'(x,}$ $\mathit{y,}$ $\mathit{z)}$ } (ok);
\draw[arc,pos=0.5,bend left=20] (receiver_buffer) to node[bend right=0,auto,align=left,yshift=-3mm] {$\mathit{}$ $\mathit{1'(x,}$ $\mathit{y,}$ $\mathit{z)}$ } (ok);
\draw[arc,pos=0.5] (check) to node[bend right=0,auto,align=left] {} (mismatch);
\draw[arc,pos=0.5] (mismatch) to node[bend right=0,auto,align=left] { } (error);
\draw[arc,pos=0.5,bend left=20] (ok) to node[bend right=0,auto,align=left,xshift=-2mm] {\\$\mathit{1'(x,}$ $\mathit{y,}$ $\mathit{z)}$ } (receiver_buffer);
\draw[arc,pos=0.5,bend left=20] (ok)  to node[bend right=0,auto,align=left,yshift=1mm] {\\$\mathit{1'(x,}$ $\mathit{y,}$ $\mathit{z)}$ } (sender_buffer);
\draw[arc,pos=0.5] (start_check) to node[bend right=0,auto,align=left] { } (check);
\draw[arc,pos=0.5,bend left=20] (timer)  to node[bend right=0,auto,align=left] {$\mathit{[f,f]}$ } (start_check);
\draw[arc,pos=0.5,bend left=20] (start_check)   to node[bend right=0,auto,align=left] {} (timer);

\end{tikzpicture}
        }
        \vspace{0cm}
        \caption{Monitor}
        \label{fig:monitor}
     \end{subfigure}
     
    \caption{BSDC receiver behaviour and monitor}
    \label{fig:receiver}
\end{figure}

Receiver's behaviour as well as monitoring of the sent/received information bits
is shown in Figure~\ref{fig:receiver}. Initially, the receiver in Figure~\ref{fig:receive_single_bit}
has a token of age $0.0$ in place \emph{waiting}. Should the length
of the frequency slot be reached without any qubit arriving
on \emph{communication\_channel}, it fires the transition
\emph{nothing\_received}, resets the age of the token to $0$ and
updates the content of \emph{receiver\_buffer} by remembering that the last
received bit is $0$. Upon the arrival of the first qubit, the receiver
fires the transition \emph{received\_qubit}, interprets it as receiving
the bit $1$, and places a token to \emph{done} where it waits to the next 
time slot. It also places a token into the place \emph{got\_first\_one}
where it remembers that on the arrival of the next qubit, it should decode
in which Bell state the two qubits were prepared. This is done 
in Figure~\ref{fig:receive_three_bits} where the second received qubit is moved
to the place \emph{received\_pair} by firing the transition \emph{receive}
and at the same time removing a token from \emph{got\_first\_one} and
keeping the frequency timer by moving a token to \emph{done} while preserving
its age. Finally, receiver applies the CNOT gate on its two qubits ($138$ ns),
performs the H gate ($32$ ns) and then two measurements ($2400$ ns each),
resulting in the total processing time of $4970$ ns, drawn from the
normal distribution with variance of $300$ by firing the transition
\emph{decode} and remembering the measured values in the receiver's buffer.

Next, in Figure~\ref{fig:monitor} we check at the end of each time slot
whether the buffered bits in the sender's and receiver's buffer match.
Should this be the case then the token in the place \emph{check} is consumed
by firing the transition \emph{ok}, otherwise we fire the transition
\emph{mismatch} and add a token to the place \emph{error} where we count
the number of encountered errors. Note that the transition \emph{mismatch}
has weight $0$ and will only fire if \emph{ok} is not enabled. Also,
it is enough to consider buffers of size three only as this is the maximum number of classical bits that can be sent in each time slot. In the actual
model of the protocol, the monitor is slightly more complicated as it counts
the exact number of mismatched bits in the two buffers, by checking a mismatch
at each position of the buffer.

\begin{figure}[t]
    \centering
    \scalebox{0.8}{
\begin{tikzpicture}[font=\scriptsize, xscale=0.45, yscale=0.45, x=1.33pt, y=1.33pt]
\tikzstyle{arc}=[->,>=stealth,thick]
\tikzstyle{transportArc}=[->,>=diamond,thick]
\tikzstyle{inhibArc}=[->,>=o,thick]
\tikzstyle{every place}=[minimum size=6mm,thick]
\tikzstyle{every transition} = [fill=black,minimum width=2mm,minimum height=5mm]
\tikzstyle{every token}=[fill=white,text=black]
\tikzstyle{sharedplace}=[place,minimum size=7.5mm,dashed,thin]
\tikzstyle{sharedtransition}=[transition, fill opacity=0, minimum width=3.5mm, minimum height=6.5mm,dashed]
\tikzstyle{urgenttransition}=[place,fill=white,minimum size=2.0mm,thin]
\tikzstyle{uncontrollabletransition}=[transition,fill=white,draw=black,very thick]
\tikzstyle{globalBox} = [draw,thick,align=left]
\node[place, label={[align=left,label distance=0cm]90:$\mathit{first\_qubit\_sent}$\\$\mathit{[dot]}$ }] at (90,-75) (first_qubit_sent) {};
\node[sharedplace ] at (first_qubit_sent.center) { };
\node[place, label={[align=left,label distance=0cm,xshift=3mm]90:$\mathit{prepared\_bell\_pair}$\\$\mathit{[bell\_pair]}$ }] at (660,-75) (prepared_bell_pair) {};
\node[sharedplace ] at (prepared_bell_pair.center) { };
\node[place, label={[align=left,label distance=0cm]90:$\mathit{decohered}$\\$\mathit{[dot]}$ }] at (405,-75) (decohered) {};
\node[transition, label={[align=left,label distance=0cm]90:$\mathit{decohere}$\\$\mathit{custom(Pauli)}$ }] at (255,-75) (decohere) {};
\node[transition, label={[align=left,label distance=0cm]90:$\mathit{introduce\_error}$\\$\mathit{constant(0)}$ }] at (525,-75) (introduce_error) {};
\draw[arc,pos=0.5, bend right=20] (first_qubit_sent)  to node[bend right=0,auto,align=left] {$\mathit{}$ } (decohere);
\draw[arc,pos=0.5,bend right=20] (decohere) to node[bend right=0,auto,align=left] {} (decohered);
\draw[arc,pos=0.5] (decohered) to node[bend right=0,auto,align=left] {$\mathit{}$ } (introduce_error);
\draw[arc,pos=0.5, bend right=20,yshift=0mm] (prepared_bell_pair)   to node[bend right=0,auto,align=left,yshift=0.5mm] {$\mathit{}$ $\mathit{1'(u,}$ $\mathit{v)}$ } (introduce_error);
\draw[arc,pos=0.5,bend right=20] (introduce_error)   to node[bend right=0,auto,below,align=left,yshift=1.5mm] {\\$\mathit{1'(x,}$ $\mathit{y)}$ } (prepared_bell_pair);
\draw[inhibArc,pos=0.5,bend right=20] (decohered)   to node[bend right=0,auto,align=left] {} (decohere);
\draw[arc,pos=0.5, bend right=20] (decohere)   to node[bend right=0,auto,align=left] { } (first_qubit_sent);

\end{tikzpicture}
    }
    \caption{Decoherence}
    \label{fig:decoherence}
\end{figure}

Finally, Figure~\ref{fig:decoherence} shows how we model the decoherence
of the EPR pair over time. Once the first quibit is sent and a token
is added to the place \emph{first\_quibit\_sent}, the transition
\emph{decohere} will sample its firing time from the custom distribution called \emph{Pauli} that was mentioned before. If the second qubit is dispatched before
the sampled time delay, the transition \emph{decohere} becomes disabled
and the receiver will recover the correct bits encoded to second qubit by
the sender. Should the decoherence time be reached, the transition \emph{decohere}
fires and enables the transition \emph{introduce\_error} that will
consume the prepared Bell pair $(u,v)$ and replace it with another
of the possible Bell pairs (each of them with $0.25$ probability). This can
introduce up to 2 bit-flips once the receiver decodes the Bell pair.


\begin{figure}[t]
\centering
\scalebox{0.9}{
\begin{tabular}{|r||r | r | r | r | r || r | r |}
\hline
Frequency & 1-Error &  2-Error &  3-Error &  4-Error & 5-Error & Throughput & Corrected \\ \hline\hline 
5 $\mu$s &  0.994 &  0.986 &  0.970 & 0.941 & 0.890 & 200 kbit/s &  64 kbit/s \\ \hline
5.5 $\mu$s &  0.492 &  0.356 &  0.236 & 0.146 & 0.082 & 182 kbit/s & 83 kbit/s \\ \hline
6 $\mu$s &  0.290 &  0.130 &  0.031 & 0.010 & 0.002 & 167 kbit/s & 101 kbit/s \\ \hline
6.5 $\mu$s &  0.311 &  0.138 &  0.031 & 0.009 & 0.002 & 154 kbit/s & 93 kbit/s \\ \hline
7 $\mu$s &  0.350 &  0.156 &  0.036 & 0.010 & 0.002 & 143 kbit/s &  86 kbit/s \\ \hline
7.5 $\mu$s &  0.382 &  0.176 &  0.045 & 0.013 & 0.003 & 133 kbit/s & 80 kbit/s \\ \hline
8 $\mu$s &  0.401 &  0.189 &  0.051 & 0.015 & 0.003 & 125 kbit/s & 75 kbit/s \\ \hline
8.5 $\mu$s &  0.426 &  0.205 &  0.057 & 0.017 & 0.003 & 118 kbit/s & 71 kbit/s \\ \hline
9 $\mu$s &  0.451 &  0.220 &  0.063 & 0.019 & 0.004 & 111 kbit/s & 67 kbit/s \\ \hline
9.5 $\mu$s &  0.477 &  0.234 &  0.069 & 0.021 & 0.004 & 105 kbit/s & 63 kbit/s \\ \hline

\end{tabular}
}
\caption{Error probabilities for sending 50 bits of information for a given number of bit-flips and throughput before and after applying error correction}
\label{fig:results}
\end{figure}

Based on this TAPAAL SMC model, we now run a number of SMC queries that
estimate the probability that while sending 50 classical bits,
there are at least $X$ bit flips on the receiver side (query called
$X$-Error that estimates $\prob({\tt F \ error } \geq X)$). These queries for up to $5$ bit-flips are evaluated for different frequencies 
(durations of the time slot), with precision $0.001$ and confidence 
of $95$\%. With this high precision and timing in nano seconds, 
evaluation of such a query takes between 10 to 50 minutes on a Mac Studio with
Apple M2 Max 12‑core CPU. The memory requirements for SMC are negligible
(less than 50 MB).  The results are displayed in Figure~\ref{fig:results}
and the lowest (best) values of error probabilities are observed
for the frequency slot duration of $6$ $\mu$s. If the frequencies are shorter,
there is an increasing risk of not finishing all the quantum operations during
the given time slot, causing a significant increase of the error rates. On the
other hand, slower frequencies suffer from increased number of errors due
to the decoherence of the Bell pair.

The next column in Figure~\ref{fig:results} shows the computed
idealized throughput (number of delivered classical bits per second)
of the protocol, disregarding the possible bit-flips. As bit-flips
are not avoidable, we can apply classical error correction.
To do so, we formulate for each of the analyzed frequencies
a~qualitative hypothesis for sending 50 classical bits
\begin{center}
${\prob({\tt F \ error } \geq X) \geq 0.001}$
\end{center} 
in order to determine the smallest number of bit-flips $X$ that 
can happen with probability strictly lower than $0.001$.
To identify such $X$, we run the query for different values
of $X$ (using e.g. the bisection method) until we find the smallest number of bit-flips for
which the query evaluates to false. These tests are run with
the false positive/negative parameters set to $0.001$ and indifference region width of $0.0002$, and testing the qualitative
hypothesis takes typically less than a minute.

\begin{figure}[t]
\begin{center}
\include{plots/BSDC-6000-bit-throughput}
\end{center}
\vspace{-10mm}
\caption{Number of transferred bits at 6 $\mu$s frequency}
\label{fig:transferred}
\end{figure}

In order to guarantee that a message with $50$ classical bits is received incorrectly with a probability strictly less than $0.001$,
we need to be resilient for $25$ bit-flips for $5$ $\mu$s
frequency, $12$ bit-flips for $5.5$ $\mu$s frequency,
$6$ bit-flips for the remaining frequencies.
In order to achieve this, we have to use classical error correction codes
which guarantee twice as high Hamming distance as the number
of possible bit-flips~\cite{MacWilliamsSloane1977}; this translates to adding additional
$107$ error correcting bits for $5$ $\mu$s frequency,
$59$ bits for $5.5$ $\mu$s and $33$ extra bits for the remaining frequencies.
The last column in the table from Figure~\ref{fig:results} displays
the corrected throughput after applying the error correction codes. As expected, the highest throughput of $101$ kbit/s after applying the error correction is achieved for the frequency of $6$ $\mu$s.

We conclude this case study by adding observations (similarly as in the manufacturing process case study) to our TAPAAL SMC query, which monitor
how many
bits can be communicated using classical communication
(where we send one classical bit per time slot) compared to
the number of transferred classical bits using the BSDC quantum
protocol (where we send at most one qubit per
time slot). A TAPAAL generated plot comparing these two
methods is shown in Figure~\ref{fig:transferred}. As
the number of transferred bits in the quantum protocol
depends on the actual distribution of $0$ and $1$ in the input
message, we depict the average as well as minimum/maximum number of transferred bits for a~given time duration. The plot shows
on average 48\% increase in the number of transferred bits in the quantum protocol compared to the classical communication using the same frequency time slot of $6$ $\mu$s when sending a single bit per time slot.

\section{Conclusion}

We proposed, to the best of our knowledge, the first stochastic semantics to the popular
timed-arc Petri net model. The semantics is weak and race-based, implying that an enabled
transition may not necessarily fire within its firing interval but can instead delay out of its enabledness zone, based
on the firing date sampled from any probabilistic distribution (we currently support seven continuous distributions and two discrete ones, but the approach is not limited only to these as we also support custom user-provided distributions). This allows
us to model a wide range of stochastic systems, as demonstrated in our case studies. 

We discussed the design and implementation of the statistical model checking algorithm
and argued that it is well-defined. All algorithms are implemented in the open-source model checker
TAPAAL, including numerous performance optimizations (e.g. parallel execution) as well as
a support for the visualization of the statistical model checking results. Our experiments
indicate that the formalism is applicable to a broad range of problems and allows us to reason
about complex stochastic behaviours in an intuitive way.

In the future work, we would like to extend the SMC reachability formulae to a more powerful 
logic (like e.g. LTL) that will allow us to reason about probability of runs that satisfy
additional requirements. An optimization of our SMC algorithm for handling rare events is another line
of research.

\paragraph{Acknowledgements.}
We thank Mikkel Tygesen for his help with TAPAAL GUI and in particular the visualization of the SMC results, and to Nikolaj Rossander Kristensen for his help with the generation of the custom Pauli-twirl distribution in the last case study.
This paper was funded by the Villum Investigator project S4OS and the ANR project BisoUS ANR-22-CE48-0012.

\bibliographystyle{fundam}
\bibliography{TAPN_SMC.bib}
\end{document}